\DeclareMathAlphabet{\mathbbm}{U}{bbm}{m}{n}
\newcommand{\vcol}[1]{\textcolor{blue}{#1}}
\def\rrr#1\\{\par
\medskip\hbox{\vbox{\parindent=2em\hsize=6.12in
\hangindent=4em\hangafter=1#1}}}
\newmdtheoremenv[innertopmargin=-2pt]{research}{Research direction}
          \newcommand{\cM}{\mathcal{M}}     \newcommand{\cX}{\mathcal{X}} \newcommand{\cR}{\mathcal{R}}  
\newcommand{\RR}{\mathbb{R}}
\newcommand{\err}{\text{Err}} 
\newcommand{\minimize}{\operatornamewithlimits{minimize}}
\newtheorem{theorem}{{\bf Theorem}}
\newtheorem{corollary}{{\bf Corollary}}
\definecolor{maincolor}{HTML}{032F99}
\definecolor{blue}{RGB}{31,64,122}
\definecolor{red}{HTML}{e05a87}
\title{Fairness Issues and Mitigations in (Differentially Private) Socio-Demographic Data Processes}
\date{} 					
\author{
 Joonhyuk Ko \\
  University of Virginia\\
  \texttt{tah3af@virginia.edu} \\
   \And
 Juba Ziani \\
  Georgia Institute of Technology\\
  \texttt{jziani3@gatech.edu} \\
  \And
 Saswat Das \\
  University of Virginia\\
  \texttt{duh6ae@virginia.edu} \\
  \AND
 Matt Williams \\
  RTI International\\
  \texttt{mrwilliams@rti.org} \\
  \And
 Ferdinando Fioretto \\
  University of Virginia\\
  \texttt{fioretto@virginia.edu} \\
}
\begin{document}
\maketitle

\begin{abstract}
Statistical agencies rely on sampling techniques to collect socio-demographic data crucial for policy-making and resource allocation. This paper shows that surveys of important societal relevance introduce sampling errors that unevenly impact group-level estimates, thereby compromising fairness in downstream decisions. To address these issues, this paper introduces an optimization approach modeled on real-world survey design processes, ensuring sampling costs are optimized while maintaining error margins within prescribed tolerances. Additionally, privacy-preserving methods used to determine sampling rates can further impact these fairness issues. This paper explores the impact of differential privacy on the statistics informing the sampling process, revealing a surprising effect: not only is the expected negative effect from the addition of noise for differential privacy negligible, but also this privacy noise can in fact reduce unfairness as it positively biases smaller counts. These findings are validated over an extensive analysis using datasets commonly applied in census statistics.
\end{abstract}

\keywords{Survey Design \and Fairness \and Differential Privacy \and Optimization}

\section{Introduction}
\label{sec:introduction}

Statistical agencies across various countries gather, anonymize, and disseminate socio-demographic data, which is foundational to high-impact applications such as policy development, urban planning, and public health initiatives \cite{USCensusRedistricting, FHWA_CTPP_2023, Census_Health}. In the United States, for example, Census Bureau data guide more than \$2.8 trillion in federal funding yearly \cite{uscensus2023funding,TFHY:ijcai21}. Major surveys such as the American Community Survey (ACS) \cite{ACS_Bureau}, the Current Population Survey (CPS) \cite{CPS}, and the National Health Interview Survey (NHIS) \cite{NHIS} are central to gathering essential demographic data. The ACS, for example, annually collects data from approximately 3.54 million housing unit addresses across the United States (about a 1\% sample of the entire U.S. population). This sample-based approach allows the ACS to provide detailed insights into the population's living conditions, educational attainment, employment, and health status, among other factors. 
The accuracy of these demographic reports is thus crucial to ensure that resources and policy measures are effectively targeted toward appropriate population segments. 
However, despite their critical role, the collection of these statistics typically involves surveying a small fraction of the population, inherently introducing sampling errors.
While these surveys strive to provide estimates with controlled error rates and confidence intervals, such control is typically applied across the entire survey population. 
A key finding of this paper is to 
\emph{demonstrate how such an approach can lead to varying error rates among population groups}, particularly those distinguished by ethnicity, introducing biases in critical downstream tasks relying on this data. 

Therefore, the \emph{first} major contribution of this paper is to address the need for developing sampling schemes that not only aim to reduce costs but also meet acceptable errors \emph{within} each demographic group. The approach explored recasts the sampling process as an optimization program, ensuring that statistical accuracy is maintained across diverse sub-populations within prescribed confidence errors.

The \emph{second} major contribution of this work is to analyze the impacts of privacy-enhancing technologies on the biases of demographic data. In particular, we focus on \emph{differential privacy} (DP) \cite{dwork:06} as implemented by the U.S.~Census Bureau. Interestingly, contrary to prevailing intuition in the literature \cite{FHZ:ijcai22}, our findings suggest that these privacy measures do not necessarily exacerbate disparities. In fact, differential privacy can surprisingly mitigate the observed unfairness by boosting the representation of smaller populations (minorities) due to positive biases introduced during the DP post-processing phase. This observation provides a novel perspective on the tradeoffs between privacy and fairness, demonstrating that DP can contribute positively to fairness when implemented in contexts similar to those analyzed in this work. 
A schematic illustration of the proposed framework is provided in Figure \ref{fig:survey}.

\begin{figure*}
    \centering
    \includegraphics[width=0.95\linewidth]{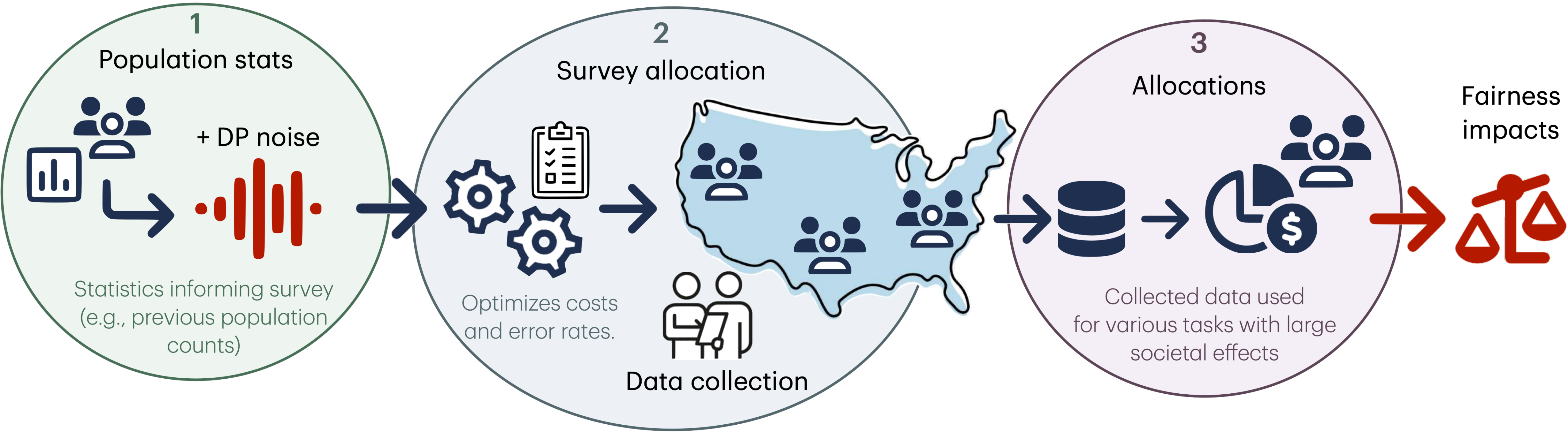}
    \caption{{\bf 1.} Population statistics from previous years are often used to inform the survey design process; Differential privacy can be used at this stage to protect sensitive information (e.g., population counts). 
    {\bf 2.} The survey process includes selecting the amount of the population to sub-sample as well as collecting information from individuals in multiple phases (e.g., phone calls and in-person interviews).
    {\bf 3.} The collected data is used for important tasks, such as the allocation of funds or the release of migration patterns. 
    {\em The paper studies the fairness impacts of this pipeline (steps 1 and 2) on multiple population segments}. 
    }
    \label{fig:survey}
\end{figure*}

\paragraph{Contributions.} 
This paper provides four key contributions:
\begin{enumerate}[leftmargin=*, topsep=0pt, parsep=0pt, itemsep=2pt]
\item First, we show that conventional sampling strategies may overlook potential disparate impacts on crucial ethnic demographic groups. This aspect, illustrated in more detail in Section \ref{sec:realworld_impact}, provides the basis for the proposed work. 

\item We then introduce an optimization approach aimed at mitigating these disparate errors, detailed in Section~\ref{sec:opt_sampling}. The proposed approach is modeled on real-world survey design processes, such as those employed by the ACS, which involve two phases: remote communications (e.g., phone calls, emails) and door-to-door, geographically targeted interventions. Our optimization framework is designed to optimize sampling costs while ensuring that error margins are within the prescribed error tolerance with a high probability for each population segment. 

\item Next, Section \ref{sec:sampling_privacy} explores the impact of differential privacy on the statistics informing the sampling process. Since the noise adopted by differential privacy can influence the estimation of group sizes, we ask if it may negatively affect the reliability of the error bounds established in our model and exacerbate unfairness. 
\emph{Surprisingly},  we found that on real U.S. survey data, not only is the impact of this noise negligible, but also due to an intriguing by-product of positive bias induced by DP post-processing on small counts, the resulting sampling process exhibits \emph{reduced} unfairness across various population segments. 

\item Finally, based on these observations and analyses, the paper introduces a heuristic, detailed in Section~\ref{sec:heuristic}, that approximates the fair decisions made by the proposed optimization problem. All the conducted experiments in the paper use real U.S.~census data.
\end{enumerate}

\section{Preliminaries and Goals}
\label{sec:preliminaries}

This paper considers a target population, such as the U.S.~population, segmented into $G$ distinct groups characterized by race, socio-economic status, and other demographic factors. Let $N$ represent the total population size, with $N_i$ indicating the size of each group $i \in [G]$. 
We examine the population statistics $\bm{\theta}(N)$, such as average income or poverty levels, and aim to estimate these via subsampling. The subsample, of size $n$ where $n \ll N$, is used to derive the estimates $\hat{\bm{\theta}}(n)$. In particular, this analysis extends to group-specific statistics, where $\hat{\bm{\theta}}(n_i)$ represents estimates from a subsample of size $n_i$ (the number of individuals sampled from group $i$), and $\bm{\theta}(N_i)$ represents the actual statistics for group $i$.

To ease notation, in the discussions that follow, $\bm\theta$ and $\hat{\bm\theta}$ will be used to represent the true population statistics and their estimates from the subsample, respectively, when clear from the context. Similarly, within each group $i \in [G]$, $\bm\theta_i$ and $\hat{\bm\theta}_i$ will denote the actual statistics for the population and their corresponding estimates from the subsample.

\paragraph{Accuracy and fairness.}
The accuracy of these estimates is evaluated through their error and variance, 
defined for group $i$ as 
\(
    \err(\hat{\bm \theta_i}) = |\hat{\bm \theta_i} - \bm\theta_i|,
\) 
and
\(
\mathrm{Var}(\hat{\bm\theta}_i) = \mathbb{E}[\hat{\bm\theta}_i^2] - (\mathbb{E}[\hat{\bm\theta}_i])^2,
\)
respectively. 
The primary goal is to devise sampling strategies that minimize the \emph{sampling cost}---defined in subsequent sections---while ensuring that the probability of an estimator's error exceeding a certain threshold ($\gamma_i$) for \mbox{each group $i$:}
\[
    \Pr(|\err(\hat{\bm \theta}_i)| > \gamma_i) \leq \alpha, \quad \forall i \in [G],
\]
remains less than $\alpha$.

\emph{Un}\textbf{fairness} in this context is quantified by the maximum discrepancy in estimator's variances between any two groups,
\[
    \xi_{\rm Var} = \max_{i,j \in G} | \mathrm{Var}(\hat{\bm\theta}_i) - \mathrm{Var}(\hat{\bm\theta}_j) |,
\]
since the goal of the survey process is controlling confidence intervals across various populations.

\smallskip
In the first part, this paper will discuss the development of optimal sampling schemes that balance the reduction of sampling costs against the constraints on estimator's accuracy for each group. Subsequently, we will examine the impact of privacy on the biases and variance of the various sub-populations, specifically when privacy-preserving counts $\tilde{N}_i$ are used instead of the actual group population counts $N_i$. We next discuss the notion of privacy adopted in this study.

\paragraph{Differential privacy.} Differential Privacy (DP) \cite{dwork:06} is a rigorous privacy notion that characterizes the amount of information of an individual's data being disclosed in a computation.
Formally, 
  a randomized mechanism $\cM:\cX \to \cR$ with domain $\cX$ and range $\cR$ satisfies $(\epsilon, \delta)$-\emph{differential privacy} if for any output $O \subseteq \cR$ and datasets $\bm{x}, \bm{x}' \in \cX$ differing by at most one entry (written $\bm{x} \sim \bm{x}'$),
  \begin{equation}
  \label{eq:dp}
    \Pr[\cM(\bm{x}) \in O] \leq \exp(\epsilon) \Pr[\cM(\bm{x}') \in O] + \delta.
  \end{equation}

\noindent
Intuitively, DP states that specific outputs to a query are returned
with a similar probability regardless of whether the data of any
individual is included in the dataset.  Parameter $\epsilon \!>\! 0$
describes the \emph{privacy loss} of the mechanism, with values close
to $0$ denoting strong privacy. When $\delta \!=\! 0$, mechanism $\cM$ is
said to achieve $\epsilon$- or pure-DP.

A function $f$ from a dataset $\bm{x} \!\in\! \cX$ to an output set 
$O \!\subseteq\! \RR^n$ can be made differentially private by injecting random noise onto its output. The amount of noise relies on the notion of \emph{global sensitivity} 
\(
\Delta_f \!=\! \max_{\bm{x} \sim \bm{x}'} \| f(\bm{x}) - f(\bm{x}') \|_p,
\) 
for $p \!\in\! \{1,2\}$. 
In particular, the \emph{Laplace mechanism} for histogram data release (sensitivity $\Delta_f\!=\!1$), defined by
\(
    \cM_{\text{Lap}}(\bm{x}) \!=\! \bm{x} + \text{Lap}(\nicefrac{1}{\epsilon}), 
\)
\noindent where $\text{Lap}(\eta)$ is the Laplace distribution centered at 0 and with scaling factor $\eta$, satisfies $(\epsilon, 0)$-DP.

\smallskip\noindent\textbf{Post-processing.} DP satisfies several important properties. Notably, 
\emph{post-processing immunity} ensures that privacy guarantees are
preserved by arbitrary post-processing steps. 
More formally, let $\cM$ be an $(\epsilon, \delta)$-DP mechanism and 
$g$ be an arbitrary mapping from the set of possible output sequences 
to an arbitrary set. Then, $g \circ \cM$ is $(\epsilon, \delta)$-differentially 
private.

\section{Real-World Impact: The ACS Case}
\label{sec:realworld_impact}

Next, the paper looks at the implications of sampling strategies in the American Community Survey (ACS), the largest sampling effort in the U.S. carried out by the U.S. Census Bureau. 
The ACS samples approximately 3.54 million housing unit addresses annually, representing about 1\% of the U.S. population, and of those approximately 1.98 million result in successful samples \cite{CensusSampleSize}. 
This relatively small sample size introduces inherent uncertainties, termed \emph{sampling errors}, which are critical in understanding the limitations and accuracies of the data collected. The Census Bureau addresses these uncertainties by calculating standard errors and publishing margins of error at a 90 percent confidence level.

The sampling process in data collection efforts such as the ACS introduces at least two fairness issues: \emph{disparate error rates} across different populations and \emph{disparate impact of privacy-preserving mechanisms} on sampling errors. Figure \ref{fig:motivational_figure} illustrates the former issue, and we will focus on the latter in Section \ref{sec:sampling_privacy}.
The figure shows the simulation results using 2021 data for Nebraska with a 1\% sampling rate obtained from IPUMS~\cite{IPUMS}. The lines represent the errors attained while estimating the \emph{population income}, within 6 distinct sub-populations (x-axis).
The red dotted horizontal lines illustrate the target error rates. 
Mimicking real-world behaviors, surveys are uniformly distributed so each group receives a proportionate number based on its size.

Notice that, while the overall population errors (rightmost bars, in dark-blue colors) are well within the prescribed confidence errors, 
minority groups, such as {\sl Native} (American Indians), {\sl Black}, and {\sl Asian}, experience systematically larger errors compared to the {\sl White} population. Additionally, when analyzing sub-populations by demographic groups, these discrepancies reveal that they often do not adhere to the 90 percent confidence levels established by the Census Bureau. 

\begin{figure}[t]
    \centering
    \includegraphics[width=0.6\columnwidth]{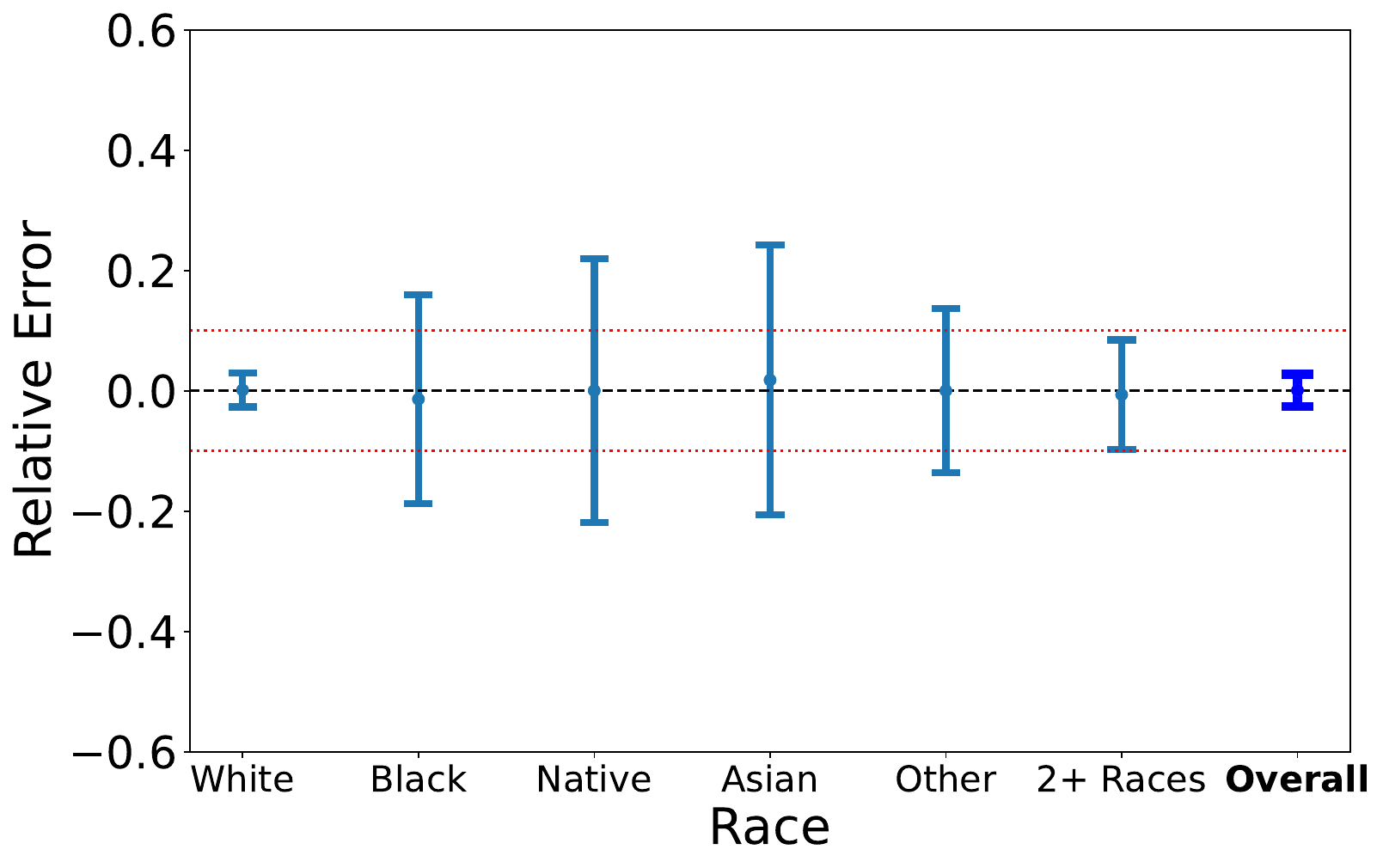}
    \caption{Disparate errors when allocating a proportional number of surveys to each racial group in Nebraska using 2022 ACS data. 
    2021 ACS data is used to compute the proportional allocation which subsample 1\% of the total population.  
    }
    \label{fig:motivational_figure}
\end{figure}

These disparities can have important repercussions given the role of these estimators in driving key policy decisions and beyond. 
Crucially, these behaviors are not well documented and the next section delves into our first key contribution: an optimization-based mitigation strategy.

\section{The Optimal Sampling Design Problem}
\label{sec:opt_sampling}

The proposed approach casts the sampling process for target estimation as an optimization process, going beyond typical cost minimization of large-scale surveys, to ensure that error rates within each sub-population are met with high probability. This section first outlines the optimization problem, considering real-world survey constraints, and then quantifies the error of estimators for each sub-population, enabling efficient implementation of the optimization model.

\subsection{Modeling Real-World Sampling Processes}

Large survey processes typically involve two phases. The first phase adopts various \emph{remote} data collection modes; for example, the ACS has used internet interviews since 2013 and computer-assisted phone interviewing until 2017 \cite{ACS_method}. 
The second phase relies on in-person, \emph{door-to-door interviews}, requiring the physical allocation of survey workers.  Although more expensive, this phase aims to improve data completeness and reliability, especially in environments where remote methods are less effective.

The efficacy of each phase is distinguished by distinct \emph{failure rates}-- the likelihood that an individual, once contacted, does not contribute data. These rates are denoted as $F^1_i$ and $F^2_i$ for the first and second phases respectively and vary across different population segments $i$. The costs associated with each contact attempt are denoted by $c_1$ and $c_2$ for the first and second phases, respectively. Typically, the cost-efficiency trade-off is clear: remote methods (Phase 1) are cheaper but often less effective ($F_i^1 > F_i^2$), while in-person interventions (Phase 2) yield higher success rates at a higher cost ($c_1 < c_2$). We define $g^r$ as the targeted or feasible sampling rate in region $r$ once selected for phase 2. Further, $\gamma_i$ represents the upper bound of acceptable error for population segment $i$ (i.e., error rates depicted by the red dotted lines in Figure \ref{fig:motivational_figure}), and $\alpha$ as the probability that this limit is exceeded. We define the following program to optimize this process:
{\normalsize
\begin{subequations}
    \label{eq:4}
    \begin{align}
        \minimize_{\vcol{\bm p, \bm z}} &\; 
         \underbrace{c_1 \left(\sum_{i \in [G]} \vcol{p_i} N_i \right)}_{1^{\text{st}} \text{ phase cost}}
              + \underbrace{c_2 \left(\sum_{r \in R} \vcol{z_r} \right)}_{2^{\text{nd}} \text{ phase cost}} \label{obj:4a}\\
    \hspace{-10pt}
            \texttt{s.t.}~~&
                {n}_i = \underbrace{\vcol{p_i} N_i \left(1-{F_i^1}\right)}_{1^{\text{st}} \text{ phase samples}}
                          +  \overbrace{\sum_{r\in R} \vcol{z_r} g^r N_i^r \left(1 - {F_i^2}\right)}^{2^{\text{nd}} \text{ phase samples}}
                         \; \forall i \in [G] \label{c:4b}\\
            &\; \Pr(|\err(\hat{\bm \theta}_i({n}_i))| > \gamma_i) \leq \alpha, \;\; \forall i \in [G], \label{c:4c}\\
            &\; 0 \leq \vcol{p_i} \leq 1 \;\; \forall i \in [G],
            ~\vcol{z_r} \in \{0, 1\} \;\; \forall r \in R. \label{c:4d}
\end{align}
\end{subequations}
}
The goal is to minimize the costs associated with contacting individuals through the two sampling phases described above, denoted by $c_1$ and $c_2$ (objective \eqref{obj:4a}), while ensuring that the error across each demographic group $i$ does not exceed $\gamma_i$ with a probability greater than $\alpha$ (constraint \eqref{c:4c}). Decision variables \vcol{$p_i$} model the fraction of group $i$ contacted in the remote Phase 1, and \vcol{$z_r$}, a binary variable, determines whether workers are deployed in region $r$ (where $R$ is the set of all regions) during Phase 2 (constraint \eqref{c:4d}). In the minimizer notation $\vcol{\bm p}$ and $\vcol{\bm z}$ are used as shorthand for the vectors $(\vcol{p_i})_{i \in [G]}$ and $(\vcol{z_r})_{r \in R}$, respectively.
Constraint~\eqref{c:4b} defines ${n}_i$, the average number of individuals that \emph{respond} to the survey across Phases 1 and 2. In population of size $N_i$, the surveyor contacts $\vcol{p_i} N_i$ individuals in Phase 1; $\vcol{p_i} N_i (1 - F_i^1)$ is then the rate at which individuals respond \emph{in expectation}. In Phase 2, in each region $r$, the surveyor reaches $\vcol{z_r}g^rN_i^r$ individuals, making $\vcol{z_r} N_i^r g^r (1 - F_i^2)$ the expected numbers of responses from population $i$ in region $r$.

\subsection{Tractable Error Quantification}

A key challenge with solving Program~\eqref{eq:4} is Constraint~\eqref{c:4c}, which involves a probability estimation. 
The lack of a closed-form expression for this probability hinders the direct integration of this constraint into the optimization. 
To address this, this section provides a tractable upper bound to be used in place of the probability in Constraint~\eqref{c:4c}.

Note that, using Chebyshev's inequality, the probability of the estimator's error exceeding $\gamma_i$ is bounded above by:
\begin{equation}
    \label{o1:bias_bound}
    \Pr(|\err(\hat{\bm \theta}_i)| > \gamma_i) = \Pr(|\hat{\bm \theta}_i - \bm \theta_i| > \gamma_i) \leq \frac{\sigma^2(\hat{\bm \theta}_i)}{\gamma_i^2}, 
\end{equation}
where $\sigma^2(\hat{\bm \theta}_i)$ represents the variance of the estimator $\hat{\bm \theta}_i$. This variance can then be estimated empirically, as done in practice \cite{ACS_method}, using prior data releases. This creates a statistical proxy, which is discussed in the Section~\ref{sec:empirical_variance_estimation}. 

For a given confidence level $\alpha$, from \eqref{o1:bias_bound}, we can replace Constraint~\eqref{c:4c} by the \emph{stronger} constraint $\frac{\sigma^2(\hat{\bm \theta}_i)}{\gamma_i^2} \!\leq\! \alpha$, and obtain a closed-form approximation for the threshold $\gamma_i$ as: 
\begin{align}\label{c:4c_stronger}
    \sigma^2(\hat{\bm \theta}_i) \leq \alpha \gamma_i^2.
\end{align}
This new constraint strengthens the program by enforcing $\Pr(|\err(\hat{\bm \theta}_i)| > \gamma_i) \leq \frac{\sigma^2(\hat{\bm \theta}_i)}{\gamma_i^2} \leq \alpha$, which restricts the likelihood that the error in group $i$ exceeds the desired $\gamma_i$ threshold, thus tightening the optimization.
The variance of the estimator $\sigma^2(\hat{\bm \theta}_i)$ can thus be expressed as 
\begin{equation}
    \label{eq:emp_variance}
    \sigma^2(\hat{\bm \theta}_i) = \frac{C_i}{n_i},
\end{equation}
where $C_i$ is a constant that depends on the variance for group $i$. 
This follows from the variance of the estimator $\sigma^2(\hat{\bm \theta}_i)$ being inversely proportional to the sample size $n_i$.\footnote{For $n$ iid random variables $x_i\!\sim\!\mathcal{N}(\mu,\sigma^2)$ and estimator $\hat x \!=\! \frac{\sum_i^n x_i}{n}$, $\mathrm{Var}(\hat x) \!=\! \mathrm{Var}(\frac1n\sum_i^n \!x_i) \!=\! \frac1{n^2}\mathrm{Var}(\sum_i^n \!x_i) \!=\! \frac1{n^2} n\sigma^2 \!=\! \frac{\sigma^2}{n}\!\!$.}
Thus, by substituting this expression into Equation \eqref{c:4c_stronger}, constraint \eqref{c:4c} can be replaced by the following tractable form: 
\begin{equation}
\label{c:4c_approx}
    n_i \geq \frac{C_i}{\alpha \gamma_i^2}, \quad \forall i \in [G] \tag{$2\bar{c}$}.
\end{equation}

\begin{figure*}[t]
    \centering
    \begin{subfigure}[t]{0.3\textwidth}
        \centering
        \includegraphics[width=\linewidth]{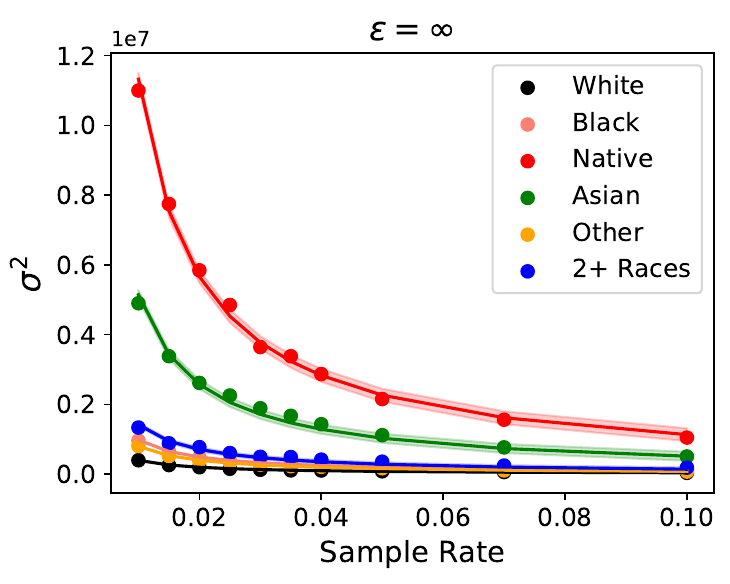}
    \end{subfigure}
    \hfill
    \begin{subfigure}[t]{0.3\textwidth}
        \centering
        \includegraphics[width=\linewidth]{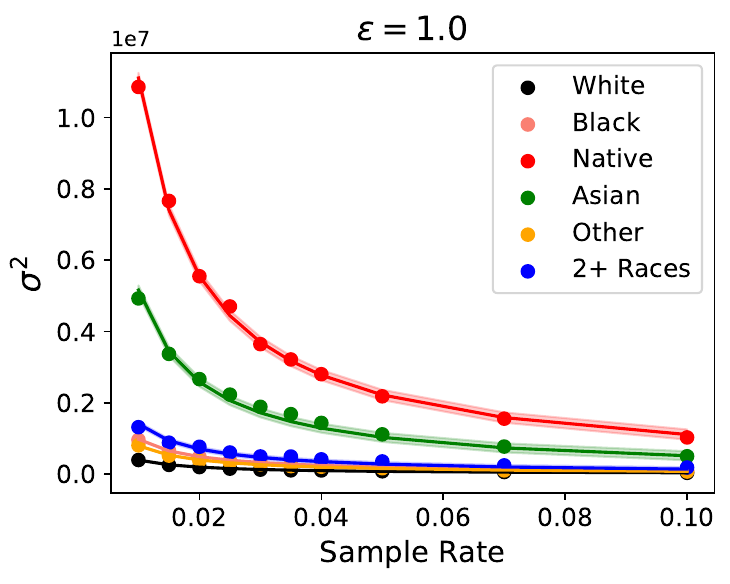}
    \end{subfigure}
    \hfill
    \begin{subfigure}[t]{0.3\textwidth}
        \centering
        \includegraphics[width=\linewidth]{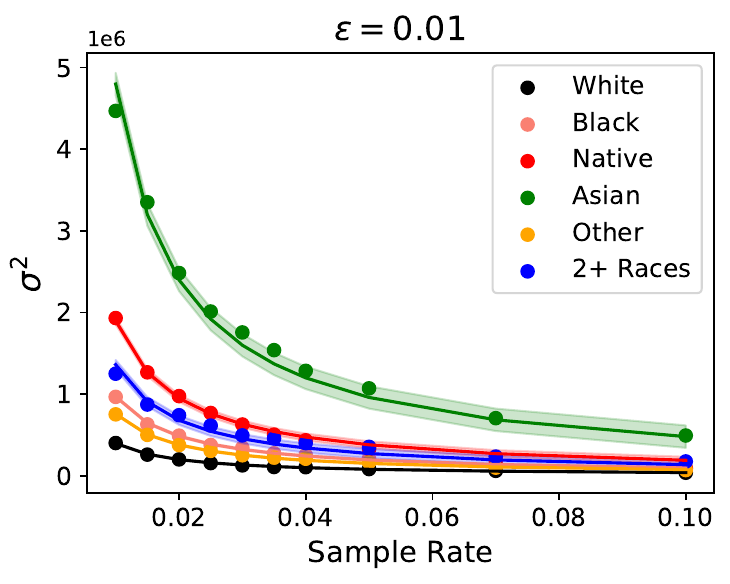}        
    \end{subfigure}
    \caption{Estimating the variance of mean \emph{income} in Connecticut using \emph{race} as a subgroup with different privacy budget $\varepsilon$. 
    {\bf Points}: actual estimator measurement, {\bf curves}: proxy function fitting. Results averaged over 200 trials and 200 data points.}
    \label{fig:proxy}
\end{figure*}

\subsection{Empirical Variance Estimation}
\label{sec:empirical_variance_estimation}

In the above expression, $C_i$ is a constant that depends on the variance of the population for group $i$ (see Equation \eqref{eq:emp_variance}) and can be estimated by approximating the variance of the target estimates across a range of sampling rates. Figure \ref{fig:proxy} (left) illustrates such an approach 
showing how the variance of each subgroup changes across sampling rates within practical (e.g., budget) constraints that limit sampling to no more than 10\% of the population. 
The middle and right figures also report this effect when privacy is considered to protect the sub-population counts $N_i$, as discussed in Section \ref{sec:sampling_privacy}. 

Approximating the variance $\sigma^2(\bm{\theta}_i)$ relies on fitting a curve of the form $\frac{a_i}{x}$ for each population group $i$, where $a_i$ is the constant to be estimated and $x = \frac{n_i}{N_i} \in [0,1]$ represents the sampling rate. These curves, referred to as \emph{proxy functions}, estimate the variance based on sampling rates rather than absolute sample sizes, enabling a direct integration in our error quantification constraints.
Finally, to translate these proxy functions into a usable format within the constraint \eqref{c:4c_approx}, we equate $C_i$ with $a_i\,N_i$, following:

\[
    \sigma^2(\hat{\bm \theta}_i) = \frac{C_i}{n_i} = \frac{a_i}{x} = \frac{a_i}{n_i/N_i} = \frac{a_iN_i}{n_i}.
\]

\section{Private Sampling Scheme}
\label{sec:sampling_privacy}

The sampling design discussed above assumes accurate knowledge of 
population sizes; however, the confidentiality of collected micro-data is often legally mandated. For example, it is regulated by Title 13~\cite{USTitle13} in the U.S., and, to comply with it, the U.S.~Census Bureau used differential privacy for their 2020 decennial census release~\cite{USCensusAdoptsDP}.
However, the use of DP mechanisms introduces perturbations in the data that may disproportionately affect smaller populations \cite{TFHY:ijcai21,ZFH:ijcai22}. 
This section studies how privacy-protected statistics could influence fairness in data collection. 

More precisely, we consider population sizes $N_i^r$ released differentially-privately for each group $i \in [G]$ and region $r \in R$, emulating the census data release. Therefore, instead of having access to the exact $N_i^r$, the survey designer only has access to imperfect, noisy estimates given by:
\begin{align}\label{eq:noisy_N}
\tilde{N}_i^r = \max\left(0, N_i^r + Lap(\Delta x/\varepsilon)\right),
\end{align}
where $\Delta x \!=\! 1$ (sensitivity of the count query).
We further note that the noisy counts are post-processed to ensure non-negativity (as is done in the U.S.~Census \cite{Spence_2023}), here using the $\max(0,.)$ operator.
    
The challenge in this context is that noise can distort estimates of population sizes $N_i^r$, influencing the number of individuals $n_i$ who respond to the survey in each group $i$. This distortion affects errors in Constraint~\eqref{c:4b} and compromises achieving the desired error targets and confidence levels. This section outlines our second key contribution: we offer theoretical insights into the biases introduced by using $\tilde{N}_i^r$ instead of $N_i^r$, while Section~\ref{sec:results} will offer a practical analysis of these impacts. 
Our main result is a \emph{closed-form} expression for the bias of the estimate $\tilde{N}_i^r$, showing that this bias is invariably positive.
\begin{theorem}
\label{thm:bias}
For all $i \in [G]$, $r \in R$, the bias of estimate $\tilde{N}_i^r$ is given in closed-form by:
\[
\mathcal{B}(\tilde{N}_i^r) = 
\mathbb{E} \left[\tilde{N}_i^r \right] - N_i^r = 
\frac{\Delta x}{2 \varepsilon} \exp \left(-\frac{N_i^r \varepsilon}{\Delta x}\right) > 0.
\]
\end{theorem}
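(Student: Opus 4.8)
The plan is to compute the expectation $\mathbb{E}[\tilde{N}_i^r]$ directly from the definition in Equation~\eqref{eq:noisy_N}. Writing $L \sim \mathrm{Lap}(\Delta x/\varepsilon)$ and $n \defeq N_i^r$ for brevity, the estimate is $\tilde{N}_i^r = \max(0, n + L)$, so the bias is $\mathbb{E}[\max(0, n+L)] - n$. The key observation is that the $\max(0, \cdot)$ clipping is what produces the bias: without it, the Laplace noise is mean-zero and the estimate would be unbiased. So the entire effect comes from the probability mass that the clipping moves up to $0$ from negative values of $n + L$.

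First I would split the expectation according to the sign of $n + L$. When $n + L \geq 0$ the contribution is $\mathbb{E}[(n+L)\,\mathbbm{1}\{n+L \geq 0\}]$, and when $n + L < 0$ the clipped value is $0$ and contributes nothing. Hence
\[
\mathbb{E}[\tilde{N}_i^r] = \int_{-n}^{\infty} (n + \ell)\, f_L(\ell)\, d\ell,
\]
where $f_L(\ell) = \frac{\varepsilon}{2\Delta x}\exp(-\varepsilon|\ell|/\Delta x)$ is the Laplace density. The cleanest route is to add and subtract the missing tail: since $\mathbb{E}[n + L] = n$ exactly, I can write $\mathbb{E}[\tilde{N}_i^r] - n = -\int_{-\infty}^{-n}(n+\ell)\,f_L(\ell)\,d\ell$, i.e. the bias equals the (positive) amount that the negative tail would have subtracted. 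This reframes the problem as evaluating a single integral of $(n+\ell)$ against the Laplace density over $\ell \in (-\infty, -n)$.

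Next I would evaluate that tail integral. Assuming $n = N_i^r \geq 0$ (true since it is a count), the region $\ell < -n \leq 0$ lies entirely in the left half of the Laplace distribution, where $f_L(\ell) = \frac{\varepsilon}{2\Delta x}\exp(\varepsilon \ell/\Delta x)$. Substituting $u = n + \ell$ (so $u$ ranges over $(-\infty, 0)$ and $\ell = u - n$) gives an integral of $u\, e^{\varepsilon(u-n)/\Delta x}$, which factors the constant $e^{-\varepsilon n/\Delta x}$ out front and leaves a standard exponential moment $\int_{-\infty}^{0} u\, e^{\varepsilon u/\Delta x}\,du = -(\Delta x/\varepsilon)^2$. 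Combining the constants should yield exactly $\frac{\Delta x}{2\varepsilon}\exp(-N_i^r\varepsilon/\Delta x)$, matching the claimed closed form; positivity is then immediate since it is a product of strictly positive factors.

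I do not anticipate a serious obstacle: the computation is an elementary integral once the clipping is correctly isolated. The one point requiring care is the bookkeeping of signs and the use of the symmetric mean-zero property of the Laplace law, which is what lets me rewrite the bias as a single negative-tail integral rather than computing the truncated expectation from scratch. A secondary point worth stating explicitly is the implicit assumption $N_i^r \geq 0$, which guarantees the truncation point $-n$ sits in the left exponential branch and makes the density substitution clean; if one wanted full generality the case $n < 0$ would need a separate (but analogous) treatment, though it is irrelevant here since $N_i^r$ is a population count.
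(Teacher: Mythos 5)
Your proof is correct and arrives at the same closed form, but it organizes the computation differently from the paper. The paper evaluates the truncated expectation $\mathbb{E}[\max(0,z)]$ head-on, splitting the integral over the three intervals $(-\infty,0)$, $(0,N_i^r)$, and $(N_i^r,\infty)$ against the Laplace density centered at $N_i^r$, and then sums the three pieces to recover $N_i^r + \tfrac{b}{2}e^{-N_i^r/b}$. You instead exploit the fact that the un-clipped noisy count is exactly unbiased ($\mathbb{E}[N_i^r+L]=N_i^r$) and write the bias as the single negative-tail integral
\[
\mathcal{B}(\tilde{N}_i^r) \;=\; \mathbb{E}\bigl[\max(0,N_i^r+L)\bigr]-\mathbb{E}\bigl[N_i^r+L\bigr] \;=\; -\int_{-\infty}^{-N_i^r}(N_i^r+\ell)\,f_L(\ell)\,d\ell,
\]
which after the substitution $u=N_i^r+\ell$ reduces to one standard exponential moment. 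Your route is arguably cleaner: it isolates the clipping as the sole source of bias (making the positivity of the result conceptually obvious before any integration, since the integrand $-(N_i^r+\ell)$ is positive on that tail), requires only one integral rather than three, and correctly flags the implicit assumption $N_i^r\ge 0$ that places the truncation point on the left exponential branch --- an assumption the paper also uses silently when it splits at $0$ and $N_i^r$ in that order. The paper's three-piece computation buys a slightly more self-contained derivation of $\mathbb{E}[\tilde{N}_i^r]$ itself (not just the bias), which it then reuses verbatim in Corollary~\ref{cor:bias_aggr}. Both are elementary and both check out; I verified your tail integral evaluates to $\frac{\varepsilon}{2\Delta x}e^{-\varepsilon N_i^r/\Delta x}\cdot(\Delta x/\varepsilon)^2=\frac{\Delta x}{2\varepsilon}e^{-N_i^r\varepsilon/\Delta x}$ as claimed.
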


\begin{proof}[Proof of Theorem \ref{thm:bias}]
\label{proof:bias}
Let $f(z) = \frac{1}{2b} \exp \left( - |z-N_i^r|/b \right)$ be the 
pdf of the Laplace centered around $N_i^r$ with 
$b = \Delta x/\varepsilon$. The expectation of the post-processed count is then given by:

\begin{align}
\mathbb{E} \left[\tilde{N}_i^r \right] 
& = \int_{-\infty}^\infty \max\left(0,z\right) f(z) dz
\\&= \int_{-\infty}^{0} 0~dz + \int_{0}^{N_i^r} z f(z)dz + \int_{N_i^r}^{\infty} z f(z)dz. \label{thm1:8}
\end{align}

\noindent The following computes separately the three terms in Equation \eqref{thm1:8}:

\begin{align}
\int_{-\infty}^{0} 0~dz &= 0 \label{thm1:9}
\\ \int_{0}^{N_i^r} z f(z)dz &= \frac{1}{2}(N_i^r - b) - \frac{1}{2}(-b)\exp(\frac{-N_i^r}{b}) \label{thm1:10}
\\ \int_{N_i^r}^{\infty} z f(z)dz &= \frac{1}{2}(N_i^r + b). \label{thm1:11}
\end{align}
Combining equations \eqref{thm1:9} - \eqref{thm1:11} with \eqref{thm1:8} gives:
\[ \mathbb{E} \left[\tilde{N}_i^r \right] = N_i^r + \frac{b}{2} e^{-N_i^r/b}. \]

\noindent Then, the bias on $\tilde{N}_i^r$ is
\[
\mathcal{B}(\tilde{N}_i^r) = \mathbb{E} \left[\tilde{N}_i^r \right] - N_i^r = \frac{b}{2} e^{-N_i^r/b} > 0.
\]
\end{proof}

Observe that not only is the bias term always positive ($\varepsilon$ is always positive), but also for fixed privacy budget $\varepsilon$, this bias is \emph{higher} on groups with small $N_i^r$. This implies that minority populations, such as Native Americans, are more likely to be overestimated. Further, if all other parameters are fixed, when $\varepsilon$ increases, the bias \emph{decreases}; in extreme cases, when $\varepsilon \to \infty$ (no privacy), the bias converges to $0$, and when $\varepsilon \to 0$ (perfect privacy), the bias grows large. This implies an interesting effect: the bias-induced overestimation of minority populations and its beneficial effects in correcting for under-allocations increase as $\varepsilon$ decreases.
Surprisingly, and contrary to much of previous known effects \cite{FHZ:ijcai22}, in the context studied here, enforcing stronger privacy induces \emph{less unfairness towards minority populations!}

This result implies the following corollary deriving the bias of the aggregated (e.g., state level) counts on $\tilde{N}_i \!=\! \sum_{r\in R} \tilde{N}_i^r$ given the various $\tilde{N}_i^r$ (e.g., county level), highlighting once again a more pronounced effect on minority populations:
\begin{corollary}
\label{cor:bias_aggr}
    The bias of the aggregated counts for each subgroup on the \emph{state} level is
    \[
    \mathcal{B}(\tilde{N}_i) = \mathbb{E} \left[\tilde{N}_i \right] - N_i = 
    \sum_{r \in [R]} \frac{\Delta x}{2 \varepsilon} \exp\left(\frac{-N_i^r \varepsilon}{\Delta x} \right) > 0.
    \]
\end{corollary}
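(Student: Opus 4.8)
The plan is to reduce Corollary~\ref{cor:bias_aggr} to a direct application of Theorem~\ref{thm:bias} together with linearity of expectation. First I would write the aggregated count explicitly as $\tilde{N}_i = \sum_{r \in R} \tilde{N}_i^r$, as defined immediately before the corollary statement, so that the quantity of interest is $\mathcal{B}(\tilde{N}_i) = \mathbb{E}[\tilde{N}_i] - N_i$ where $N_i = \sum_{r \in R} N_i^r$ is the true state-level count. The core observation is that expectation is linear regardless of whether the summands are independent, so
\[
\mathbb{E}\!\left[\tilde{N}_i\right] = \mathbb{E}\!\left[\sum_{r \in R} \tilde{N}_i^r\right] = \sum_{r \in R} \mathbb{E}\!\left[\tilde{N}_i^r\right].
\]

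Next I would substitute the per-region expectation computed in the proof of Theorem~\ref{thm:bias}, namely $\mathbb{E}[\tilde{N}_i^r] = N_i^r + \frac{\Delta x}{2\varepsilon}\exp(-N_i^r \varepsilon/\Delta x)$, into the sum. Subtracting $N_i = \sum_{r \in R} N_i^r$ then telescopes the true counts away and leaves exactly
\[
\mathcal{B}(\tilde{N}_i) = \sum_{r \in R} \frac{\Delta x}{2\varepsilon} \exp\!\left(\frac{-N_i^r \varepsilon}{\Delta x}\right),
\]
which matches the claimed closed form. Positivity follows immediately: each summand is a product of the strictly positive constant $\frac{\Delta x}{2\varepsilon}$ (since $\Delta x = 1 > 0$ and $\varepsilon > 0$) with a strictly positive exponential, so a sum over the nonempty region set $R$ is strictly positive, giving $\mathcal{B}(\tilde{N}_i) > 0$.

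I anticipate essentially no technical obstacle here, since the hard analytic work—the integration against the truncated Laplace density—was already discharged in Theorem~\ref{thm:bias}; the corollary is a one-line consequence of linearity. The only point warranting a brief remark is that linearity of expectation does \emph{not} require the noise terms $Lap(\Delta x/\varepsilon)$ across regions $r$ to be independent, so the result holds under any joint noise distribution provided the marginals match Equation~\eqref{eq:noisy_N}. It would be worth stating this explicitly, because independence is a natural assumption the reader might expect to be needed for an aggregation result but which is in fact irrelevant to the bias computation.
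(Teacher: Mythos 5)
Your proposal is correct and follows exactly the paper's own argument: linearity of expectation applied to $\tilde{N}_i = \sum_{r \in R}\tilde{N}_i^r$, substitution of the per-region expectation from Theorem~\ref{thm:bias}, and cancellation of $N_i$. The added remark that independence of the per-region noise terms is not needed is a nice clarification but does not change the route.
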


\begin{proof}[Proof of Corollary \ref{cor:bias_aggr}]
\label{proof:bias_aggr}
The expectation of the post-processed count on the \emph{state level} is given by:
\begin{align}
\mathbb{E} \left[\tilde{N}_i \right] 
&= \mathbb{E} \left[\sum_{r \in [R]} \tilde{N}_i^r \right] 
\\ &= \sum_{r \in [R]} \mathbb{E} \left[ \tilde{N}_i^r \right] (\text{Linearity of Expectation})
\\ &= \sum_{r \in [R]} \left( N_i^r + \frac{b}{2} e^{-N_i^r/b} \right) (\text{Theorem~\ref{thm:bias}})
\\ &= N_i + \sum_{r \in [R]} \frac{b}{2} e^{-N_i^r/b}.
\end{align}

\noindent Then, the bias of the post-processed count on the \emph{state level} can be expressed as 
\[
\mathcal{B}(\tilde{N}_i) = \mathbb{E} \left[\tilde{N}_i \right] - N_i = \sum_{r \in [R]} \frac{b}{2} e^{-N_i^r/b} > 0.
\]
\end{proof}
This further highlights that while this positive bias will have a major relative impact on minority populations, overestimating minority populations allows a standard sampling scheme to allocate more surveys to minorities, thus reducing their relative errors.
An empirical analysis of this phenomenon is provided in Section \ref{sssec:DP-sampling}.

\section{Experimental Results}
\label{sec:results}

Next, the paper provides empirical evidence for the efficacy of the proposed optimization method on real-world data and settings first without and then with privacy considerations at hand. The experiments examine survey costs, group fairness, and utility offered by the proposed fairness-aware method. 

\smallskip\noindent\textbf{Datasets and settings.}
The experiments use ACS data from IPUMS \cite{IPUMS} for 2021 and 2022, leveraging 2021 data for estimating the various $N_i^r$ and 2022 data as ground truth for sampling and assessing target estimators. We divide geographical units based on Census Tract-level data, each containing about 4,000 individuals (data processing and setup details are in Appendix \ref{app:experimental-details}). The focus is on estimating annual total pre-tax personal income across different ethnic and educational groups as defined by IPUMS and the Census Bureau. 
We focus on Connecticut as the primary state for analysis here and relegate additional 
comprehensive results for other states and demographic breakdowns by education levels to Appendix \ref{app:other_states} and  \ref{app:education_level}, respectively.

\smallskip\noindent\textbf{Algorithms.} 
This analysis evaluates various survey allocation mechanisms, comparing their  efficiency, fairness, and effectiveness in achieving desired confidence levels, not only at the entire state level but also at the sub-population levels:
\begin{itemize}[leftmargin=*, parsep=0pt, itemsep=0pt, topsep=2pt]
    \item \textbf{Standard Allocation:}
    This baseline method, also known as \emph{proportional stratified random sampling}, allocates surveys to each population group $i$ in proportion to their size. This approach is a \emph{stronger} baseline than simple random sampling for two key reasons: it provides more precise population estimates by reducing variance within each subgroup\footnote{This is because there is no variance on the number of surveys each subgroup receives.}, and it ensures that the sample proportions are representative of the overall population \cite{Lohr_1998}.

    \item \textbf{Optimization: Phase 1 Only:} This variant applies 
    the optimization from Program \eqref{eq:4}, assuming the survey is 
    conducted by only using the first phase. 
    More concretely, the program excludes the \emph{2$\text{nd}$ phase} components in \eqref{obj:4a} and \eqref{c:4b} (for additional details see Appendix \ref{sec:algorithm}).
    This model mirrors the proportional stratified random sampling by optimizing survey numbers within a single operational phase.

    \item \textbf{Optimization: Phase 1 and 2:} This approach uses both 
    phases as outlined in optimization \eqref{eq:4}, aligning closely 
    with practical survey methodologies. 
    Note that the choice of failure rates and costs influences the optimization outcomes. In particular, high failure rates ($F_i^1$ and $F_i^2$) or low error tolerances ($\alpha$ and $\gamma$) increase the total survey cost due to more failures and tighter constraints. Details on the effects of varying these parameters are discussed in a comprehensive ablation study in Appendix \ref{sec:varying_param}.
    The default confidence constraints are set at $\alpha = 0.1$ and $\gamma_i =$ 10\% of the mean income for each subgroup $i$. Default failure rates are $F_i^1 = 0.60$ and $F_i^2 = 0.20$ $\forall i\in[G]$, and the cost of surveying a region in phase 2 is set to be 500 times more expensive than the cost of reaching 
    out to an individual with phone calls in phase 1. Finally, the sampling rates for geographies are set as $g^r = 0.1$ $\forall\, r \in R$. This translates to sampling 400 people per selected region.
\end{itemize}

\begin{figure}[t]
    \centering
    \includegraphics[width=0.6\columnwidth]{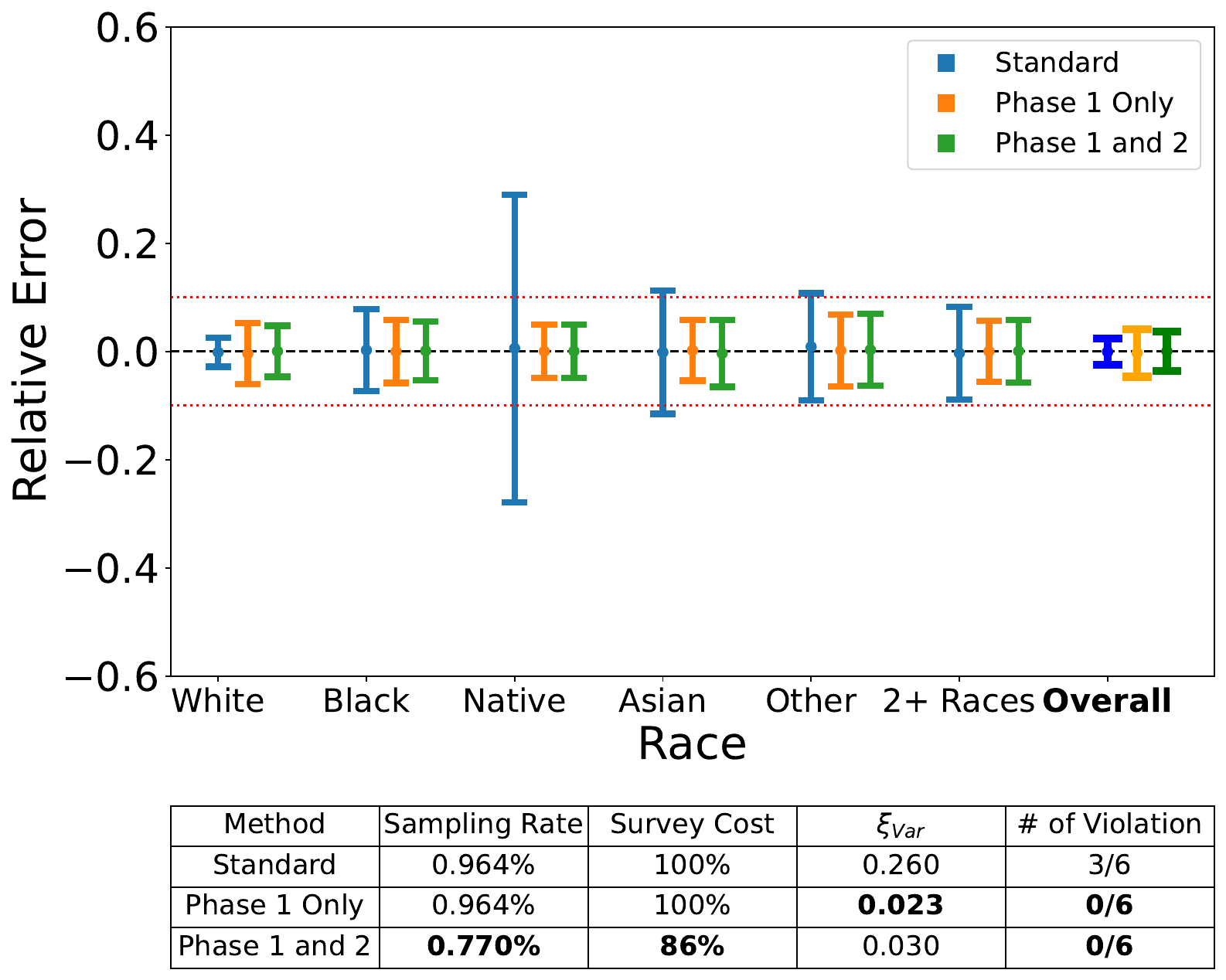}
    \caption{Relative group errors from estimating mean income in Connecticut.} 
    \label{fig:relative_error_no_privacy}
\end{figure}

\smallskip\noindent\textbf{Evaluation metrics.}
The evaluation of these mechanisms focuses on three primary metrics:
\begin{enumerate}[leftmargin=*, parsep=0pt, itemsep=0pt, topsep=2pt]
\item \emph{Survey cost}: Measured as a percentage of the cost reference 
      used by the standard allocation.
\item \emph{Fairness of variance}: Assesses the equitable distribution of 
    survey errors across different groups.
\item \emph{Confidence compliance}: Evaluates the ability to meet the 
    prescribed confidence errors ($\gamma_i$) at a 10\% threshold, 
    which aligns with the current standards of the ACS \cite{ACS_method}, and setting $\alpha = 0.1$.
\end{enumerate}
Further exploration of the impact of privacy on these 
metrics is detailed in Section \ref{sssec:DP-sampling}.\textbf{}

\subsection{Optimized Sampling: Errors and Fairness\!\!\!}

We start by assessing the performance of two variants of our method (``\emph{Phase 1 Only}'' vs.~``\emph{Phase 1 and 2}'') against the standard allocation mechanism,  without DP considerations. 

The results, summarized in Figure~\ref{fig:relative_error_no_privacy}, show that the \emph{Standard Allocation} method yields the lowest variance of error when estimating the overall population's income. However, this method disproportionately affects minorities, who receive fewer surveys and experience a higher variance of error at the group level. This discrepancy results in the worst fairness of variance ($\xi_\text{Var}$) observed (refer to table under Figure~\ref{fig:relative_error_no_privacy}), and minority groups even fail to meet the confidence constraints set for their estimations!

In contrast, the \emph{Phase 1 Only} optimization approach achieves a more uniform error variance across all subgroups while using \emph{the same} budget used in the \emph{Standard Allocation} method. Inspecting the optimization solutions, it can be observed that equity is achieved by allocating a similar number of surveys to each subgroup, irrespective of their population size. Figure~\ref{fig:allocation_plot} reports the number of survey allocations by race and by each method, and provides a clear view of the nature of the disparities. 
This redistribution significantly lowers the variance of the errors for minorities (including Native, Black, Asian, Other, and 2+), while slightly increasing it for the majority (White). Importantly, this approach \emph{enhances fairness and ensures all groups meet the confidence thresholds}, addressing the main drawback of the \emph{Standard Allocation}. In Section~\ref{sec:heuristic}, we provide further discussion on why distributing surveys equally across subgroups is an effective alternative. 

\begin{figure}[t]
    \centering
    \includegraphics[width=0.55\columnwidth]{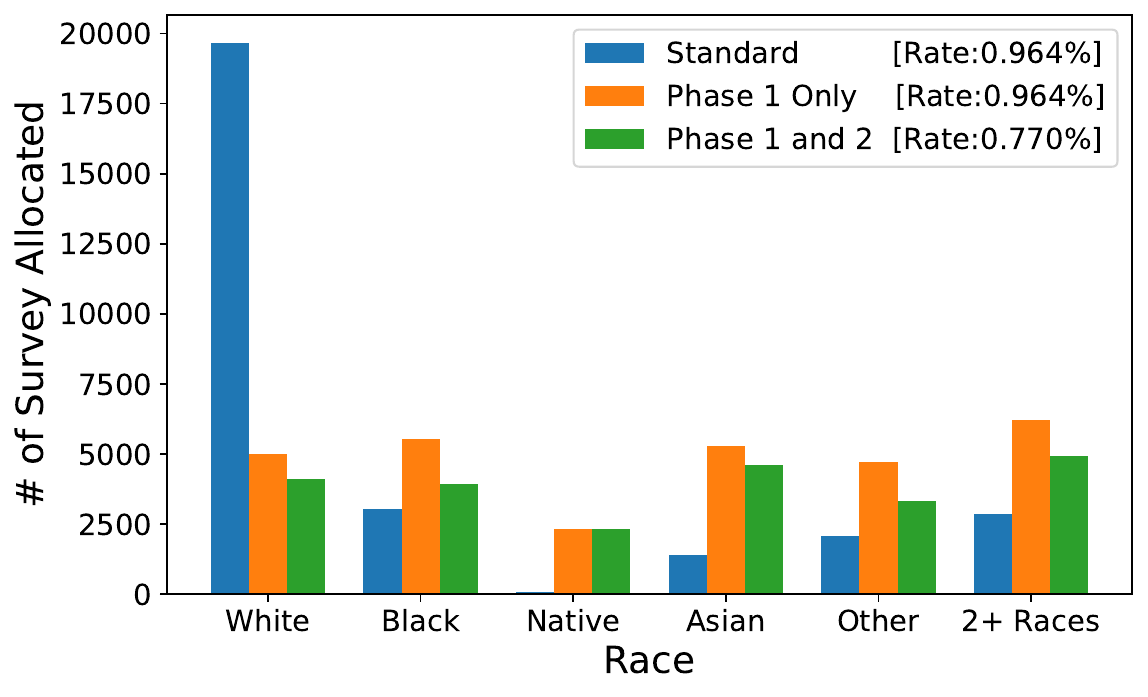}
    \caption{Number of surveys allocated for each subgroup in the experiments reported in Figure~\ref{fig:relative_error_no_privacy}.} 
    \label{fig:allocation_plot}
\end{figure}

\begin{figure*}[!t]
    \centering
    \includegraphics[width=1.0\textwidth]{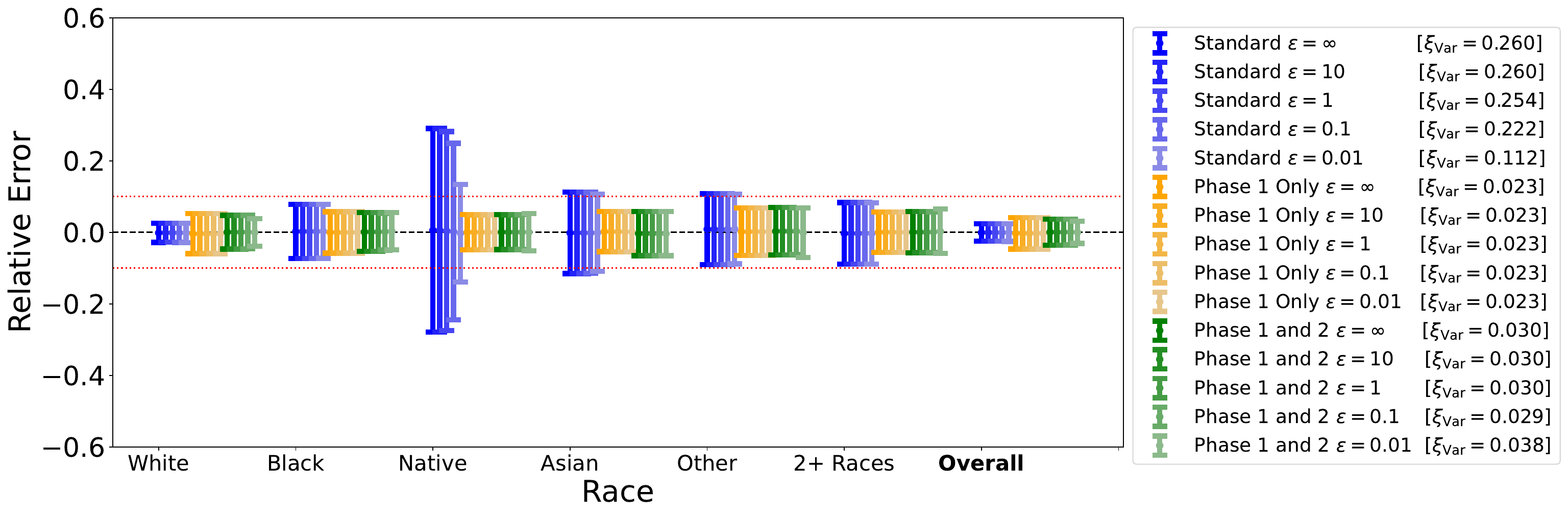}
    \caption{Relative errors from estimating mean income using DP-noised $N_i^r$ in Connecticut. Each region used in the phase 2 contains approximately 4,000 people, similar to the size of Census Tracts.
    }
    \label{fig:relative_error_with_privacy}
\end{figure*}

Next, we focus on our main approach. As discussed in Section~\ref{sec:opt_sampling}, phase 2 is characterized by a higher success rate ($F_i^1>F_i^2$) at a greater cost ($c_1<c_2$). Despite its higher per-survey cost, phase 2's low failure rate results in a higher number of \emph{successful} samples for the same overall cost, making the \emph{Phase 1 and 2} method \emph{substantially cheaper} (86\% of Phase 1 Only cost) (see table under Figure~\ref{fig:relative_error_no_privacy}). However, once regions are selected for phase 2, simple random sampling is executed at a 10\% rate ($g^r$) from each chosen region. This method introduces some uncertainty in the number of successful samples for each subgroup, although the optimizer prioritizes regions with high densities of the targeted population. This slightly reduces the performance and fairness of variance compared to the \emph{Phase 1 Only} optimization. Nonetheless, the \emph{Phase 1 and 2} method meets (by construction) the confidence constraints for every group, as empirically demonstrated.

\begin{table}[t]
\centering
\resizebox{0.60\columnwidth}{!}{
    \begin{tabular}{|c|c|c|c|c|c|c|c|}
    \hline
    $\varepsilon$ \symbol{92} Race & White      & Black   & \textbf{Native}  & Asian   & Other   & 2+ Races  & Total        \\ \hline
    $\infty$                       & 2,039,731  & 315,568 & \textbf{7,571}   & 143,584 & 215,150 & 295,844   & 3,017,448    \\ \hline
    $10$                           & 2,039,355  & 315,182 & \textbf{7,524}   & 143,226 & 214,766 & 295,482   & 3,015,535    \\ \hline
    $1$                            & 2,039,298  & 315,199 & \textbf{7,699}   & 143,260 & 214,736 & 295,495   & 3,015,687    \\ \hline
    $0.1$                          & 2,038,844  & 315,320 & \textbf{10,681}  & 143,846 & 214,555 & 295,705   & 3,018,951    \\ \hline
    $0.01$                         & 2,034,218  & 321,513 & \textbf{42,068}  & 158,498 & 222,160 & 304,533   & 3,082,990    \\ \hline
    \end{tabular}}
\caption{Impact of DP on estimated population size for each race in Connecticut using prior (e.g., ACS 2021 dataset).}
\label{table:population_estimation}
\end{table}

\subsection{DP-Sampling: Errors and Fairness}
\label{sssec:DP-sampling}

Next, we focus on the setting \emph{with differential privacy}, employing the privately adjusted counts $\tilde{N}_i^r$ as described in Equation~\eqref{eq:noisy_N}. 
The main results are reported in Figure~\ref{fig:relative_error_with_privacy}, again for the state of Connecticut, and additional results analyzing other states are reported in Appendix \ref{app:other_states}.

The first surprising result comes when analyzing the \emph{Standard Allocation} approach. While one might expect the added noise to exacerbate errors for minorities, here {\em adding more noise reduces the variance of errors for minorities}! This interesting behavior occurs because the induced strong positive bias overestimates the minority population size resulting in a higher allocation of surveys to these groups, 
as observed by our theoretical analysis in Section \ref{sec:sampling_privacy}.
Table~\ref{table:population_estimation} summarizes this effect, where it is possible to observe how much the smallest group (\emph{Native}) size is conflated with the addition of noise (smaller $\varepsilon$).
This increased allocation not only reduces the error variance but also improves fairness, countering the typical expectation that more noise increases error.

On the other hand, the \emph{Phase 1 Only} optimization appears to be insensitive in the variance of errors with respect to $\varepsilon$. This stability arises due to $C_i$, which determines the required number of samples, does not depend on group size. Thus, noise added to the population count does not impact survey distribution, maintaining consistent error variance across varying noise levels.

In contrast, the \emph{Phase 1 and 2} method experiences slight changes in the variance of errors with added noise. This effect is due to how noise affects the selection of regions for Phase 2, which relies on the population composition from prior data. More noise increases the probability of incorrect region selection, altering survey distribution and consequently, error variance. 

The observed higher positive bias in minorities as $\varepsilon$ decreases is explained by Corollary~\ref{cor:bias_aggr}, which notes that a smaller $N_i^r$ leads to larger positive biases. This implies that the region size used in phase 2 directly influences the level of positive bias. A comprehensive analysis of how different region sizes impact this bias is provided in Appendix \ref{app:sparsity_analysis}. 

Finally, note that the larger positive bias observed as $\varepsilon \rightarrow 0$ in Table~\ref{table:population_estimation} led to less discrepancy between the population sizes of different subgroups by overestimating the minorities. This results in a more uniform allocation of surveys across subgroups, thereby improving fairness. This suggests that allocating an equal number of surveys to each subgroup, irrespective of population size, may result in roughly the same relative errors. 
We implement this simple \emph{heuristic} and report a detailed analysis in Section~\ref{sec:heuristic}.
Although suboptimal (and more costly than our proposed optimization approach), this heuristic achieves a much-improved fairness score compared to the \emph{Standard Allocation} approach.

\section{Heuristics-Guided Sampling Scheme}
\label{sec:heuristics}
\label{sec:heuristic}
We will now introduce a \emph{heuristic} approach inspired by the findings from Section~\ref{sssec:DP-sampling}, where we observed that allocating a closer to equal number of surveys to each subgroup resulted in similar relative errors. The relative standard error (RSE) can be  calculated as:
\[  \text{RSE} = \left( \frac{\sigma}{\sqrt{n}} \right)/\mu = \frac{\sigma / \mu}{\sqrt{n}}. \]

Here, $\sigma$ represents the income standard deviation, $\mu$ is the average income, and $n$ denotes the sample size. The term $\sigma / \mu$, known as the coefficient of variation, is generally consistent across different subgroups. To achieve equal RSE, we need to ensure that $n$ is similar across subgroups. If $\sigma / \mu$ is constant across groups, equal sample sizes would result in a fairness score $\xi_\text{Var} = 0$, indicating perfect fairness. However, if $\sigma / \mu$ varies significantly among groups, equal sample sizes will still lead to a lower fairness score. We calculated the coefficient of variation for each race in Nebraska, as shown in Table~\ref{tab:cov_NE}.

\paragraph{Improved Motivation Figure.} 
Referring to our motivational figure (Figure~\ref{fig:motivational_figure}), we observe a significant improvement in the fairness score by allocating the same number of surveys to each race (i.e., \emph{heuristic} approach), as illustrated in Figure~\ref{fig:motivational_figure_heuristic}. This suggests that surveyors can apply the heuristic method with the same sampling rate used in the \emph{Standard Allocation} method to enhance fairness without relying on the optimization algorithm. Notably, the variance of error for Native Americans is smaller compared to other groups, despite the equal allocation, which can be attributed to their smaller coefficient of variation in Nebraska, as detailed in Table~\ref{tab:cov_NE}.


\begin{figure}[h]
    \centering
    \includegraphics[width=0.55\columnwidth]{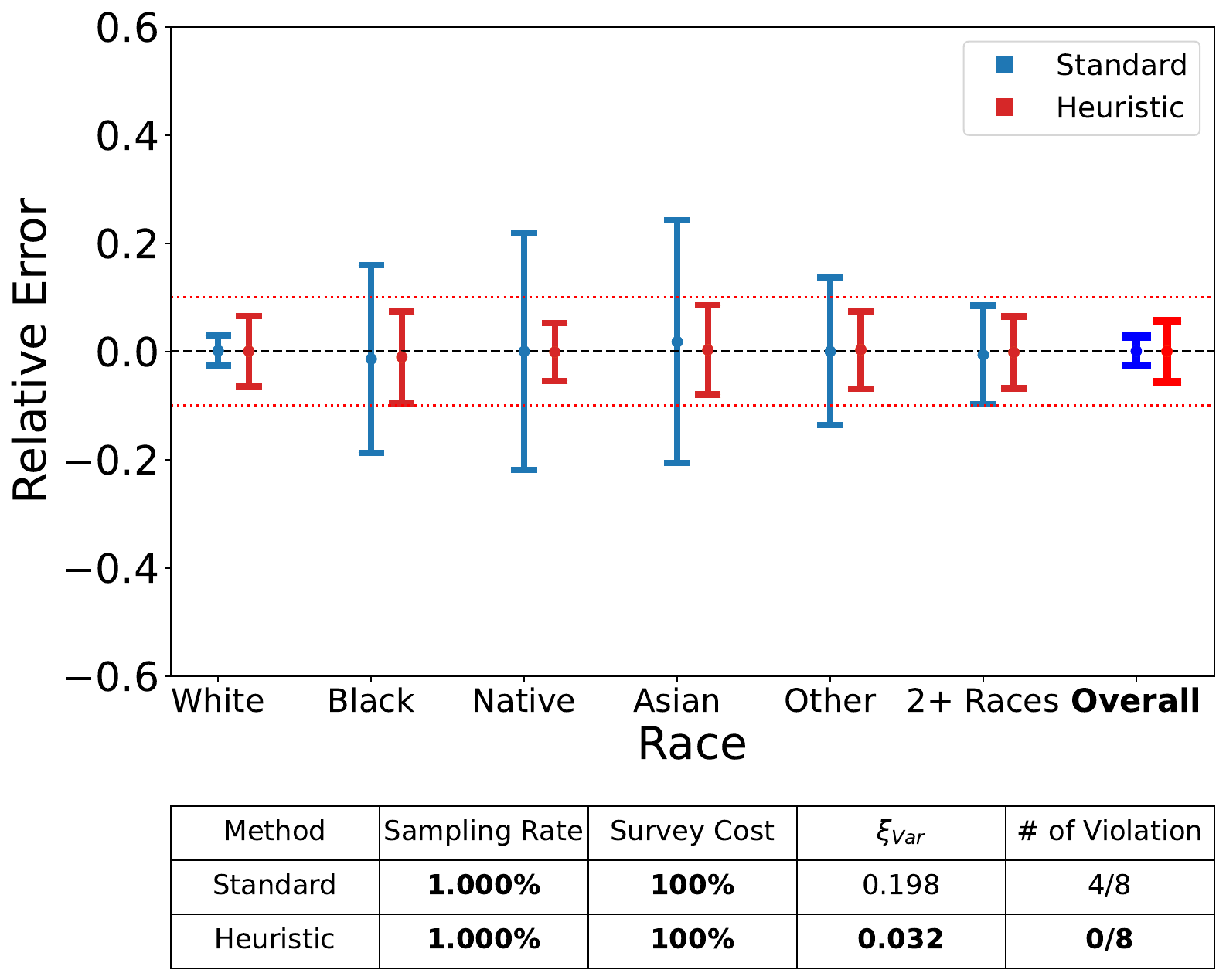}
    \caption{Relative errors from estimating income in Nebraska using the \emph{heuristic} method at 1\% sampling rate.}
    \label{fig:motivational_figure_heuristic}
\end{figure}

\begin{table}[h]
\centering
\resizebox{0.55\columnwidth}{!}{
    \begin{tabular}{|c|c|c|c|c|c|c|c|}
    \hline
    White  & Black & Native & Asian & Other & 2+ Races & Overall  \\ \hline
    1.33   & 1.72  & 1.13   & 1.57  & 1.30  & 1.19     & 1.36    \\ \hline
    \end{tabular}}
\caption{Coefficient of variation ($\frac{\sigma}{\mu}$) of income for each race in Nebraska using the ACS 2022 dataset.}
\label{tab:cov_NE}
\end{table}


\paragraph{Runtime Comparison.} 

Note that the \emph{heuristic} approach has a constant time runtime complexity since given a sampling rate $\rho$, determining $n_i \triangleq \frac{\rho \sum_{i \in [G]}N_i}{|G|}$ is a constant-time operation. In contrast, the optimization approach is a mixed-integer linear program (MILP) --- it involves binary decision variables for geographic selection in phase 2. While MILPs are NP-hard in general, the practical runtime for this algorithm in all our experiments was less than 3 seconds (for approximately 1500 regions) due to the high sparsity of the constraint matrix (low dependency between region selections).

\section{Conclusion}
This work was motivated by our observations of unfairness in large survey efforts of critical importance for driving many policy decisions and allocations of large amounts of funds and benefits. 
This paper showed that in surveys like the American Community Surveys, traditional sampling methods disproportionally affect minority groups, leading to biased statistical outcomes. 
To address these issues, we introduced an optimization-based framework to ensure fair representation in error margins in each population segment while minimizing the total sampling costs. Additionally, this paper examined the effects of differential privacy on the accuracy and fairness of the realized surveys. Contrary to common intuitions, our findings reveal that differential privacy can reduce unfairness by introducing positive biases beneficial to underrepresented populations. 
These findings are validated through rigorous and comprehensive experimental analysis using real-world data, demonstrating the effectiveness of the proposed optimization-based strategies in terms of enhancing fairness without compromising data utility and costs. 

We believe that these results may have significant implications for policy formulation and resource allocation with critical societal and economic impacts.

\section*{Acknowledgements}
This research is partly funded by NSF grants SaTC-2133169, RI-2232054, and CAREER-2143706. The views and conclusions of this work are those of the authors only. The authors are also thankful to Christine Task for early discussion and feedback regarding this topic.


\begin{thebibliography}{10}

\bibitem{USCensusAdoptsDP}
J.~M. Abowd.
\newblock The u.s. census bureau adopts differential privacy.
\newblock In {\em Proceedings of the 24th ACM SIGKDD International Conference on Knowledge Discovery \& Data Mining}, KDD '18, page 2867, New York, NY, USA, 2018. Association for Computing Machinery.

\bibitem{NHIS}
CDC.
\newblock National center for health statistics.
\newblock {\em CDC - National Center for Health Statistics}, July 2024.

\bibitem{dwork:06}
C.~Dwork, F.~McSherry, K.~Nissim, and A.~Smith.
\newblock Calibrating noise to sensitivity in private data analysis.
\newblock {\em Theory Of Cryptography, Proceedings}, 3876:265--284, 2006.
\newblock 3rd Theory of Cryptography Conference ; Conference date: 04-03-2006 Through 07-03-2006.

\bibitem{FHWA_CTPP_2023}
U.~D. FHWA.
\newblock U.s. department of transportation - federal highway administration.
\newblock {\em Census Transportation Planning Products}, Dec. 2023.

\bibitem{FHZ:ijcai22}
F.~Fioretto, C.~Tran, P.~V. Hentenryck, and K.~Zhu.
\newblock Differential privacy and fairness in decisions and learning tasks: {A} survey.
\newblock In {\em International Joint Conference on Artificial Intelligence}, pages 5470--5477. ijcai.org, 2022.

\bibitem{Census_Diversity_Index}
E.~Jensen.
\newblock Measuring racial and ethnic diversity for the 2020 census, Jun 2022.

\bibitem{Lohr_1998}
S.~Lohr.
\newblock {\em Sampling: Design and Analysis}.
\newblock CRC Press, Boca Raton, FL, Dec. 1998.

\bibitem{ACS_method}
E.~Poehler.
\newblock American community survey and puerto rico community survey design and methodology, Nov 2022.

\bibitem{IPUMS}
S.~Ruggles, S.~Flood, M.~Sobek, D.~Backman, A.~Chen, G.~Cooper, S.~Richards, R.~Rodgers, and M.~Schouweiler.
\newblock {IPUMS} {USA}: Version 15.0, 2024.

\bibitem{Spence_2023}
M.~Spence.
\newblock What to expect: Disclosure avoidance and the 2020 census demographic and housing characteristics file.
\newblock {\em {U.S. Census Bureau}}, May 2023.

\bibitem{TFHY:ijcai21}
C.~Tran, F.~Fioretto, P.~{Van Hentenryck}, and Z.~Yao.
\newblock Decision making with differential privacy under a fairness lens.
\newblock In {\em International Joint Conference on Artificial Intelligence}, pages 560--566. ijcai.org, 2021.

\bibitem{uscensus2023funding}
{U.S. Census Bureau}.
\newblock Census bureau data guide more than \$2.8 trillion in federal funding in fiscal year 2021, 2023.
\newblock Accessed: 2024-12-17.

\bibitem{USTitle13}
{U.S. Congress}.
\newblock Title 13 - census.
\newblock {\em The United States Code}, 1954.

\bibitem{Census_Health}
USCB.
\newblock Health.
\newblock {\em {U.S. Census Bureau}}, Nov. 2022.

\bibitem{CPS}
USCB.
\newblock Current population survey (cps), Oct. 2023.

\bibitem{USCensusRedistricting}
USCB.
\newblock Decennial census p.l. 94-171 redistricting data, Nov. 2023.

\bibitem{CensusSampleSize}
USCB.
\newblock Sample size and data quality, Sep 2023.

\bibitem{ACS_Bureau}
USCB.
\newblock American community survey (acs), June 2024.

\bibitem{ZFH:ijcai22}
K.~Zhu, F.~Fioretto, and P.~V. Hentenryck.
\newblock Post-processing of differentially private data: {A} fairness perspective.
\newblock In {\em International Joint Conference on Artificial Intelligence}, pages 4029--4035. ijcai.org, 2022.

\end{thebibliography}

\newpage
\appendix
\label{sec:appendix}

\onecolumn
\begin{center}
\LARGE{\bf Appendix}
\end{center}

\addtocontents{toc}{\protect\setcounter{tocdepth}{2}}
\tableofcontents


\FloatBarrier
\section{Experimental Details}
\label{app:experimental-details}

\paragraph{Experimental settings.} Each of the results in this paper was produced using 1 CPU (Intel) with 64 GB of memory per task. We conducted all experiments using ACS data from IPUMS \cite{IPUMS} for 2021 and 2022. Following standard survey design practices, 2021 data was used as prior information, while 2022 data served as the ground truth from which samples were drawn.

\paragraph{Data Preprocessing.}
To ensure meaningful analysis, we preprocessed the ACS data as follows:
\begin{enumerate}
\item \textbf{Handling Missing Values:} Records with unknown income or educational levels were removed.
\item \textbf{Merging Sparse Racial Groups:} To prevent issues with extremely small sample sizes, sparse racial groups were consolidated. For example, Asian subgroups (e.g., Chinese, Japanese, Pacific Islanders) were combined into a single "Asian" category, and categories like "two major races" and "three or more major races" were merged into "2 or more races."
\item \textbf{Merging Sparse Educational Levels:} Similar to racial groups, educational levels with small sample sizes were combined. For instance, "Grade 9," "Grade 10," and "Grade 11" were merged into a group that consists of high school educated people who did not graduate, and "No schooling" and "Nursery to grade 4" were combined into a group consisting of people with limited education. Details are provided in Table~\ref{table:education_level}.
\item \textbf{Cloning Weighted Records:} Each record in the ACS data, representing multiple individuals via the {\texttt{PERWT}} attribute, was cloned by its weight to better reflect population distribution in our analysis.
\end{enumerate}

\paragraph{Geographical divisions.}
The smallest geographical unit provided in ACS data is Public Use Microdata Area (\texttt{PUMA}). Each \texttt{PUMA} contains at least 100,000 people in order to maintain confidentiality. However, due to the nature of large group size, \texttt{PUMA} is not a great representation of the geographical unit to be used during the second phase (door-to-door) of the survey. Therefore, we artificially divided them into smaller regions of approximately 4,000 people, emulating Census Tract data. Although actual tract sizes vary, 4,000 is the optimal size recommended by the Census Bureau.

\FloatBarrier
\section{Additional Experiments on Other States}
\label{app:other_states}

This section presents results for Maine and Nevada, chosen for their contrasting levels of racial diversity—Maine being more homogeneous and Nevada more diverse. These states were selected based on the Census Bureau's Simpson's Index of Diversity \cite{Census_Diversity_Index}, calculated as:
\[ D = 1 - \sum_{i \in [G]}\left(\frac{N_i}{N}\right)^2\]
Here, $D$ ranges from 0 to 1, representing the probability that two randomly chosen individuals will belong to different racial groups. For detailed population sizes and diversity scores, refer to Table~\ref{tab:state_characteristics}.

\begin{table}[h!]
\centering
\resizebox{0.6 \columnwidth}{!}{
    \begin{tabular}{|c|c|c|c|c|}
    \hline
    State          & Population  & Racial Diversity Index & Income Per Capita (\$) \\ \hline
    Maine          & 1,170,363   & 0.16 (Low Diversity)   & 46,087                 \\ \hline
    Connecticut    & 3,017,448   & 0.52 (Baseline)        & 59,772                 \\ \hline
    Nevada         & 2,566,612   & 0.67 (High Diversity)  & 44,742                 \\ \hline
    \end{tabular}}
\caption{Characteristics of three states after data pre-processing done on ACS 2022 dataset.}
\label{tab:state_characteristics}
\end{table}

\paragraph{Proxy plots.}
Figure~\ref{app:fig:proxy_NV} and Figure~\ref{app:fig:proxy_ME} display the proxy plots for Nevada and Maine, respectively. Note that the estimator's variance is \emph{high} in Maine for most racial groups, due to the small population sizes of minorities. 

\begin{figure*}[t]
    \centering
    \begin{subfigure}[t]{0.195\textwidth}
        \centering
        \includegraphics[width=\linewidth]{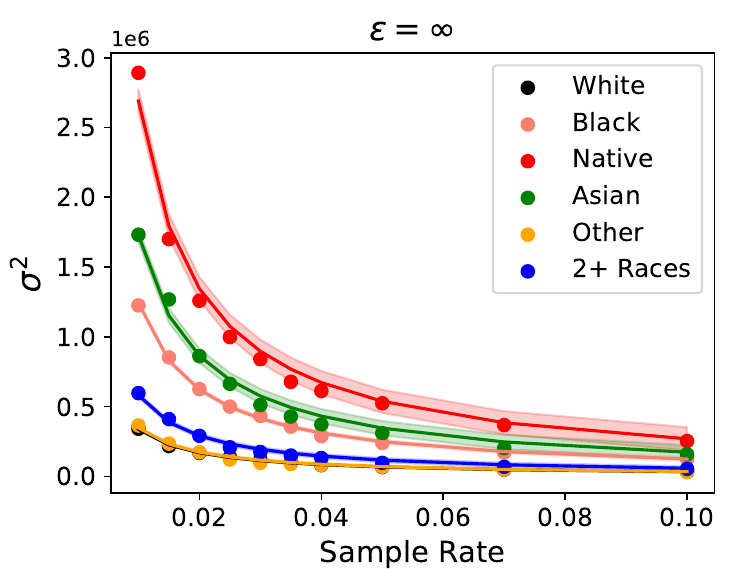}
    \end{subfigure}
    \begin{subfigure}[t]{0.195\textwidth}
        \centering
        \includegraphics[width=\linewidth]{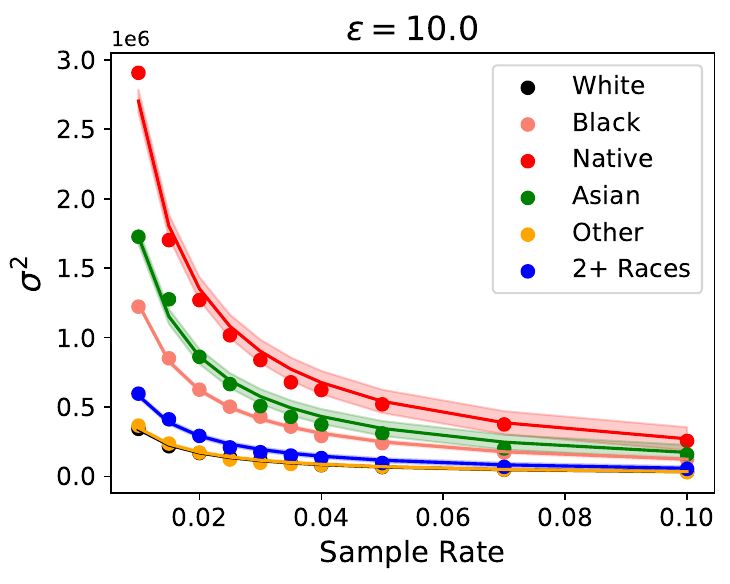}
    \end{subfigure}
    \begin{subfigure}[t]{0.195\textwidth}
        \centering
        \includegraphics[width=\linewidth]{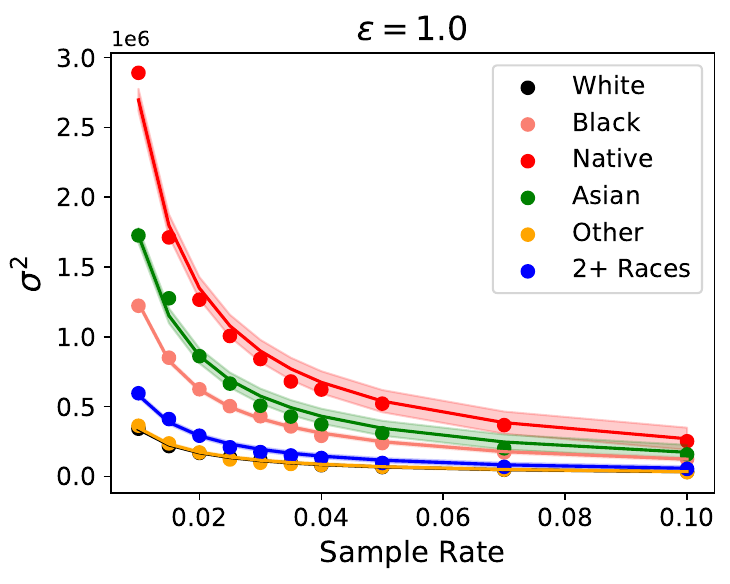}
    \end{subfigure}
    \begin{subfigure}[t]{0.195\textwidth}
        \centering
        \includegraphics[width=\linewidth]{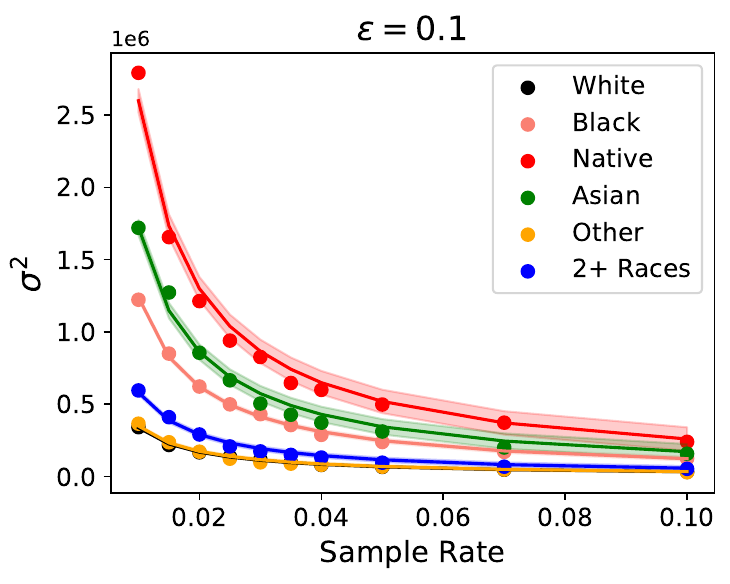}
    \end{subfigure}
    \begin{subfigure}[t]{0.195\textwidth}
        \centering
        \includegraphics[width=\linewidth]{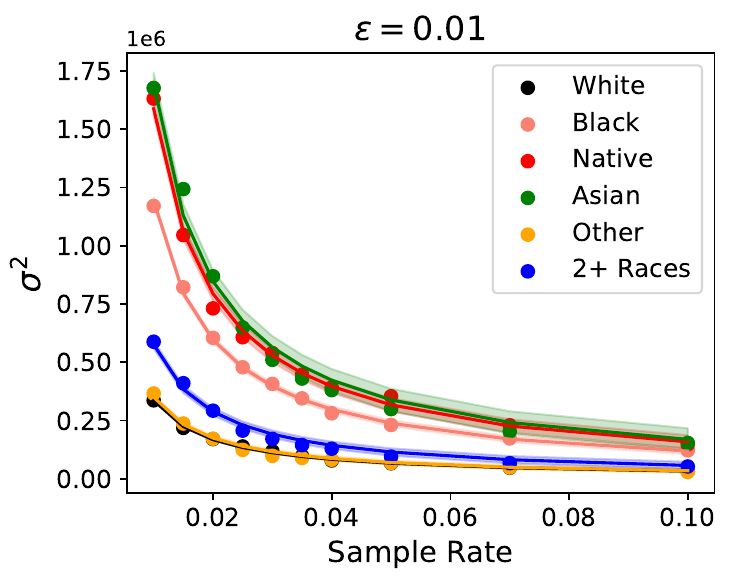}
    \end{subfigure}
    \caption{Estimating the variance of mean \emph{income} in Nevada using \emph{race} as a subgroup with different privacy budget $\varepsilon$. Results averaged over 200 trials and 200 data points.}
    \label{app:fig:proxy_NV}
\end{figure*}

\begin{figure*}[h]
    \centering
    \begin{subfigure}[t]{0.195\textwidth}
        \centering
        \includegraphics[width=\linewidth]{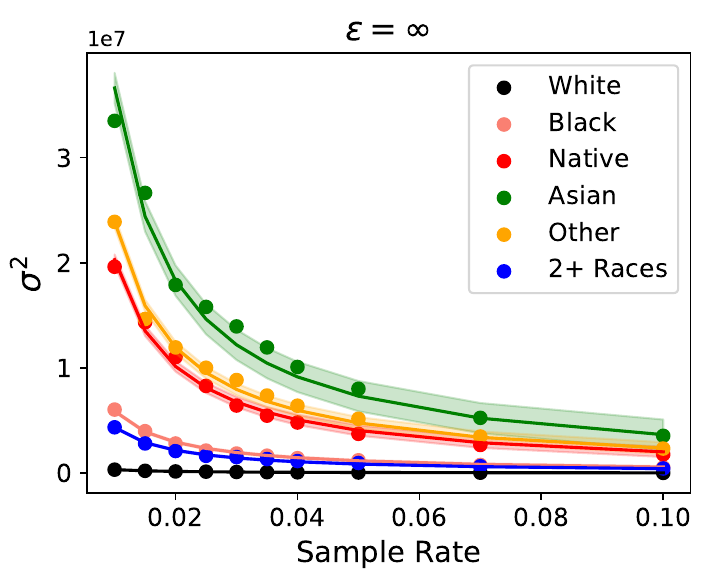}
    \end{subfigure}
    \begin{subfigure}[t]{0.195\textwidth}
        \centering
        \includegraphics[width=\linewidth]{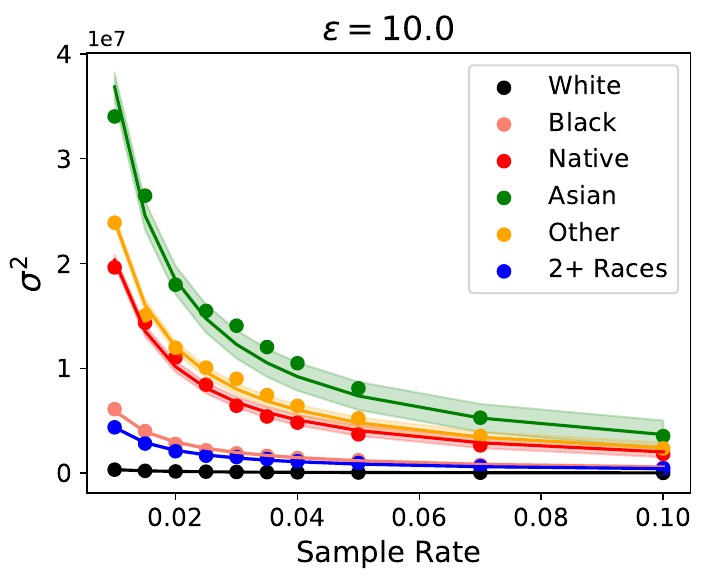}
    \end{subfigure}
    \begin{subfigure}[t]{0.195\textwidth}
        \centering
        \includegraphics[width=\linewidth]{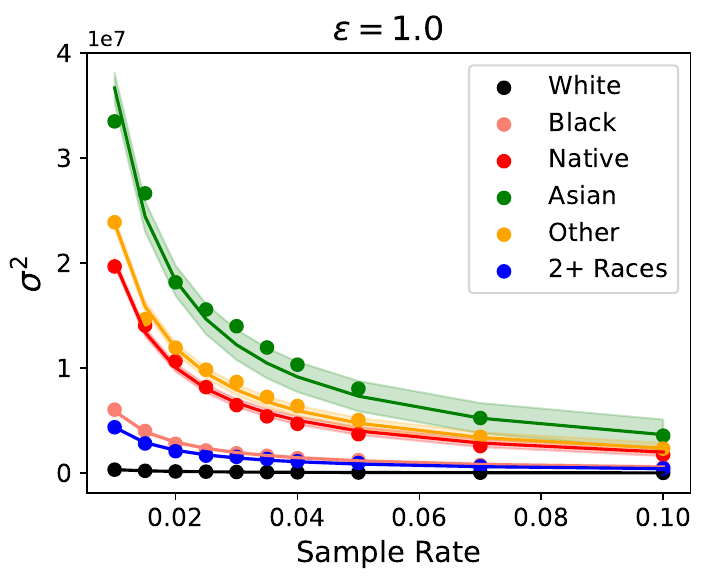}
    \end{subfigure}
    \begin{subfigure}[t]{0.195\textwidth}
        \centering
        \includegraphics[width=\linewidth]{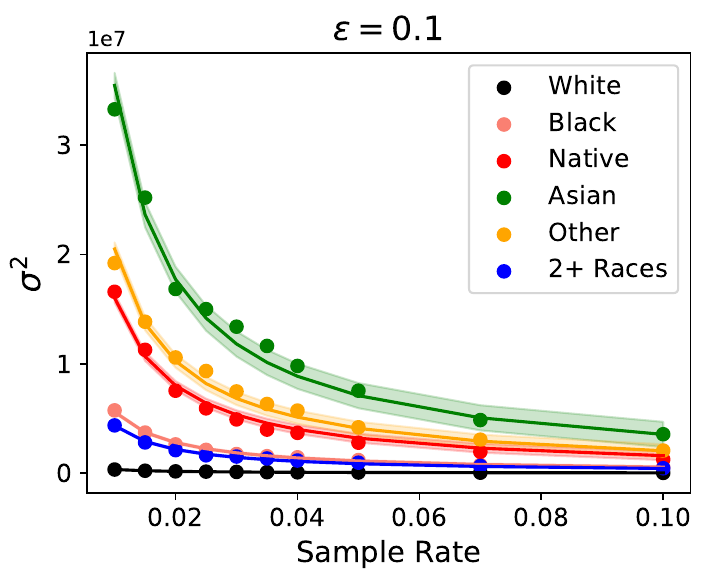}
    \end{subfigure}
    \begin{subfigure}[t]{0.195\textwidth}
        \centering
        \includegraphics[width=\linewidth]{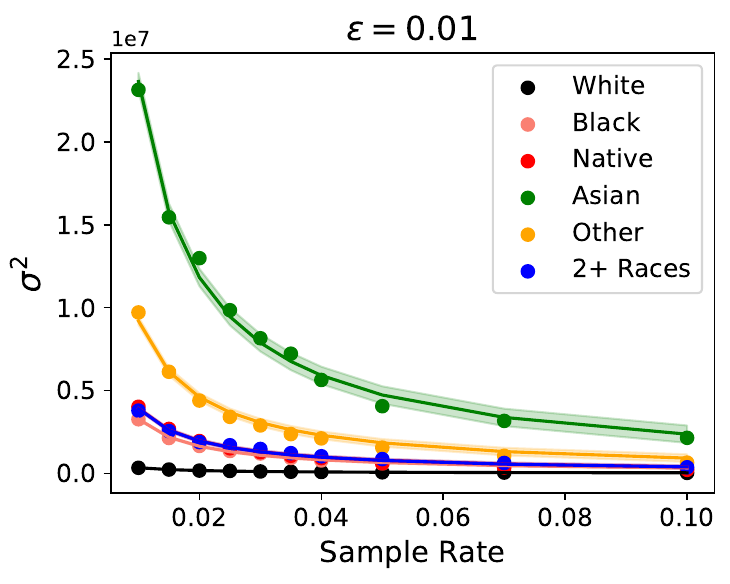}
    \end{subfigure}
    \caption{Estimating the variance of mean \emph{income} in Maine using \emph{race} as a subgroup with different privacy budget $\varepsilon$. Results averaged over 200 trials and 200 data points.}
    \label{app:fig:proxy_ME}
\end{figure*}

\begin{figure}[h]
    \centering
    \begin{subfigure}[t]{0.48\textwidth}
        \centering
        \includegraphics[width=\linewidth]{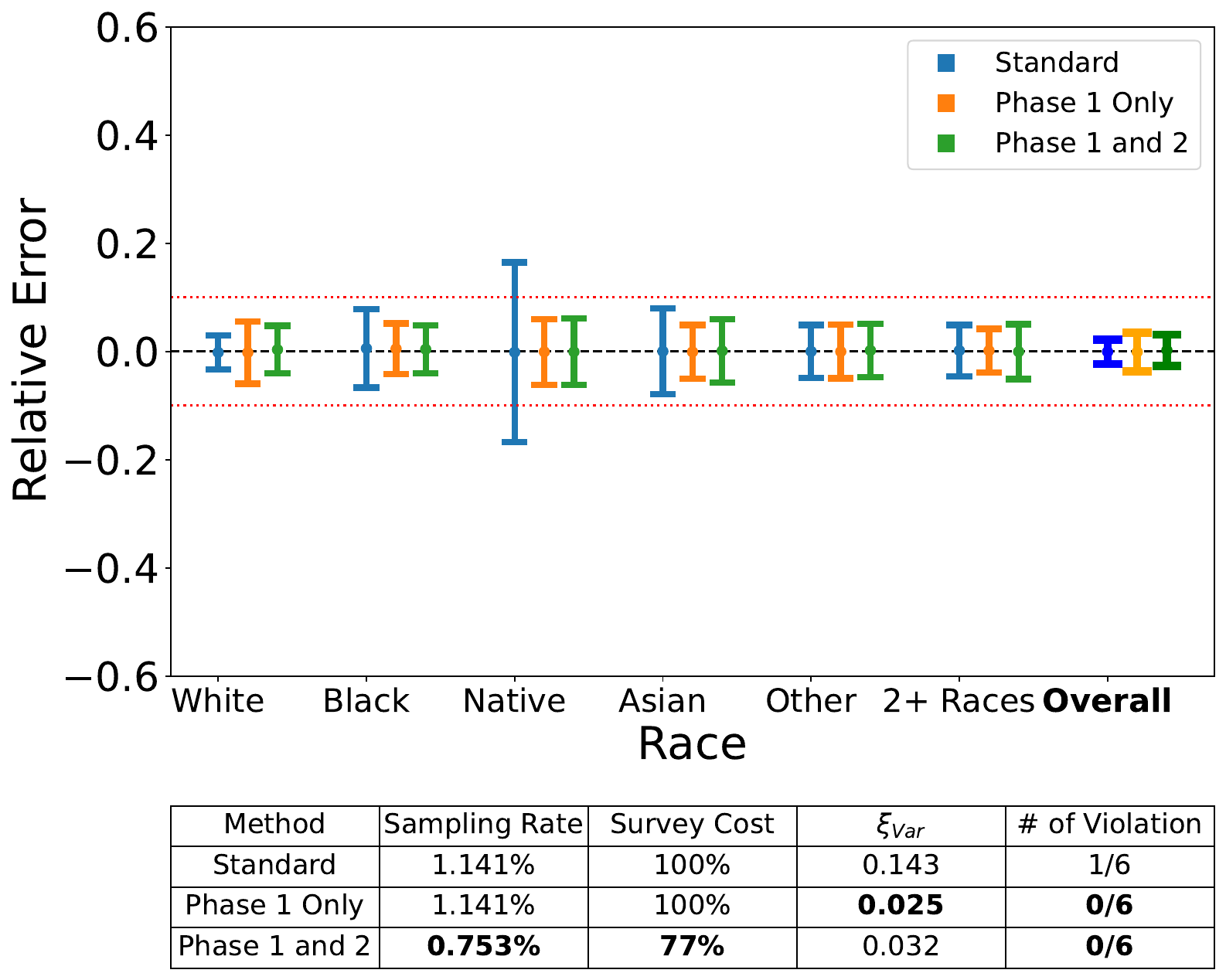}
        \caption{\emph{Nevada}}
        \label{fig:rel_errors_NV_race}
    \end{subfigure}
    \hfill
    \begin{subfigure}[t]{0.48\textwidth}
        \centering
        \includegraphics[width=\linewidth]{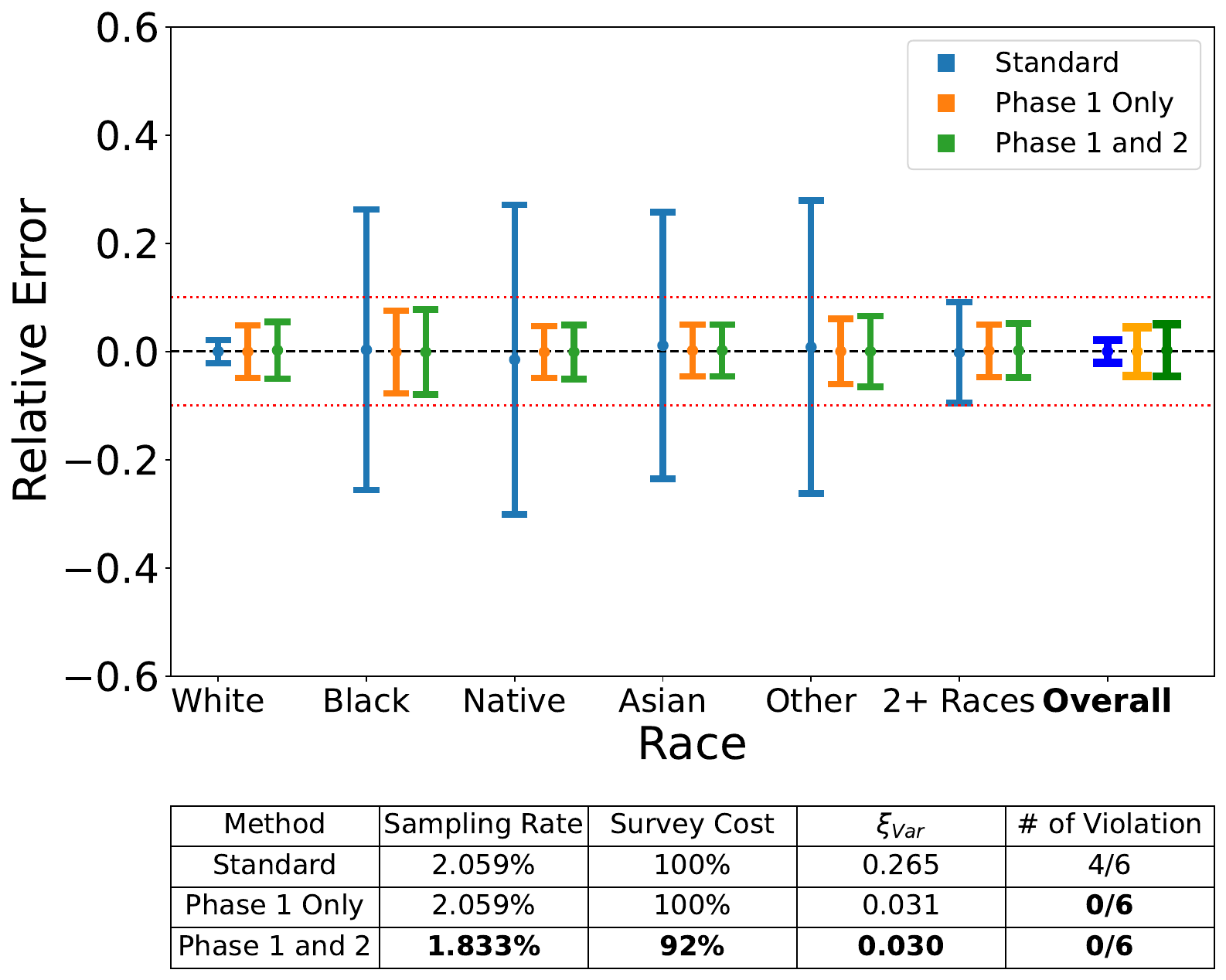}
        \caption{\emph{Maine}}
        \label{fig:rel_errors_ME_race}
    \end{subfigure}
    \caption{Relative errors from estimating average \emph{income} for each \emph{race} in Nevada and Maine}
    \label{fig:rel_errors_other_states}
\end{figure}

\paragraph{Optimized Sampling.}
The \emph{Standard Allocation} method performs much better in Nevada, as shown in Figure~\ref{fig:rel_errors_NV_race}, compared to Maine in Figure~\ref{fig:rel_errors_ME_race}. This is because Nevada's more homogeneous population allows for proportional sampling to allocate surveys more equitably across racial groups. In contrast, Maine's smaller minority population makes this approach much less effective.

\paragraph{DP sampling.}
The surprising finding from Section~\ref{sssec:DP-sampling}—that adding more noise can reduce the variance of errors—is evident in both states, as shown in Figures~\ref{fig:dp_plots_NV} and \ref{fig:dp_plots_ME}. This effect is more pronounced in Maine due to its smaller racial minority population, leading to a greater positive bias, as discussed in Section~\ref{sec:sampling_privacy}. Table~\ref{table:population_estimation_ME} and Table~\ref{table:population_estimation_NV} show the impact of differential privacy on the estimated population sizes for Maine and Nevada, respectively.

\begin{figure*}[h]
    \centering
    \begin{subfigure}[t]{1.0\textwidth}
        \centering
        \includegraphics[width=1.0\textwidth]{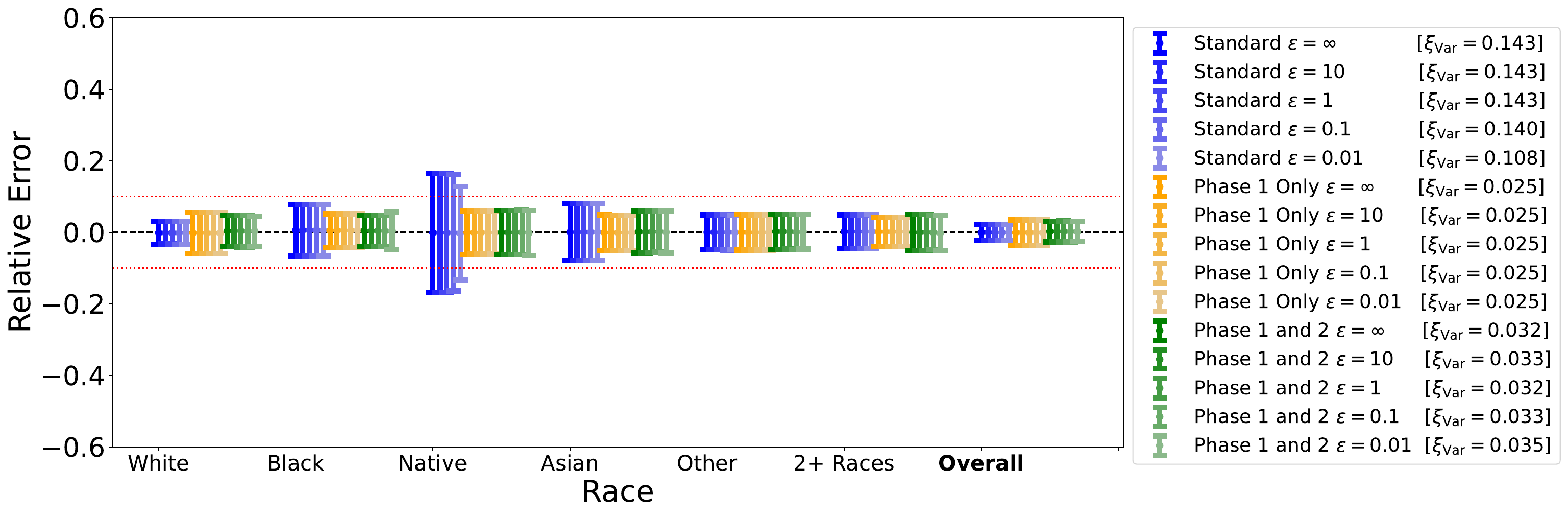}
        \caption{\emph{Nevada}}
        \label{fig:dp_plots_NV}
    \end{subfigure}
    \hfill
    \begin{subfigure}[t]{1.0\textwidth}
        \centering
        \includegraphics[width=1.0\textwidth]{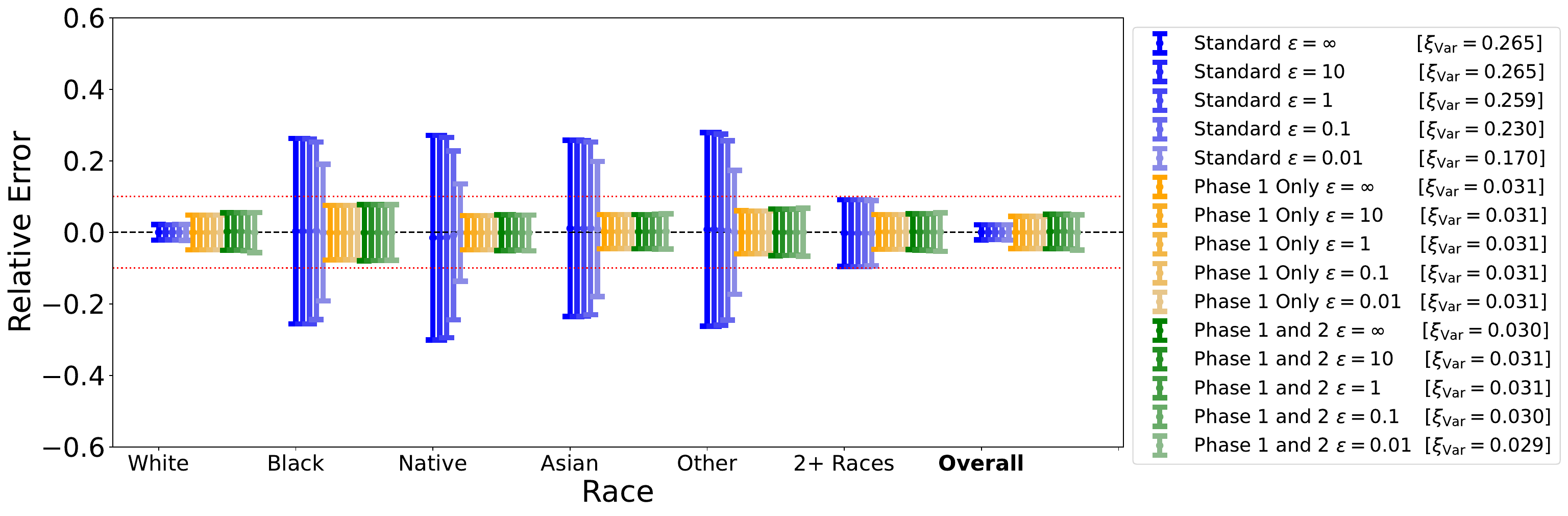}
        \caption{\emph{Maine}}
        \label{fig:dp_plots_ME}
    \end{subfigure}
    \caption{Relative errors from estimating mean income using DP-noised $N_i^r$ for Nevada and Maine. Each region used in the phase 2 contains approximately 4,000 people, similar to the size of Census Tracts.}
    \label{fig:dp_plots_NV_ME}
\end{figure*}

\begin{table}[h]
\centering
\resizebox{0.7\columnwidth}{!}{
    \begin{tabular}{|c|c|c|c|c|c|c|c|}
    \hline
    $\varepsilon$ \symbol{92} Race & White      & Black   & Native  & Asian   & Other   & 2+ Races  & Total        \\ \hline
    $\infty$                       & 1,065,951  & 15,952  & 3,695   & 15,454  & 8,554   & 60,757    & 1,170,363    \\ \hline
    $10$                           & 1,065,797  & 15,873  & 3,678   & 15,362  & 8,507   & 60,623    & 1,169,840    \\ \hline
    $1$                            & 1,065,740  & 15,917  & 3,741   & 15,412  & 8,596   & 60,636    & 1,170,042    \\ \hline
    $0.1$                          & 1,065,167  & 16,795  & 4,855   & 16,094  & 9,897   & 60,889    & 1,173,697    \\ \hline
    $0.01$                         & 1,059,408  & 27,982  & 16,641  & 25,435  & 24,194  & 66,630    & 1,220,290    \\ \hline
    \end{tabular}}
\caption{Estimated population for each race in Maine using prior (e.g. ACS 2021 dataset).}
\label{table:population_estimation_ME}
\end{table}

\begin{table}[h]
\centering
\resizebox{0.7\columnwidth}{!}{
    \begin{tabular}{|c|c|c|c|c|c|c|c|}
    \hline
    $\varepsilon$ \symbol{92} Race & White      & Black   & Native  & Asian   & Other   & 2+ Races  & Total        \\ \hline
    $\infty$                       & 1,337,725  & 226,955 & 37,721  & 251,807 & 302,774 & 409,630   & 2,566,612    \\ \hline
    $10$                           & 1,337,394  & 226,632 & 37,546  & 251,472 & 302,445 & 409,314   & 2,564,803    \\ \hline
    $1$                            & 1,337,330  & 226,646 & 37,618  & 251,468 & 302,408 & 409,322   & 2,564,792   \\ \hline
    $0.1$                          & 1,336,724  & 226,897 & 39,084  & 251,368 & 301,904 & 409,376   & 2,565,353  \\ \hline
    $0.01$                         & 1,330,540  & 234,827 & 60,706  & 255,013 & 301,702 & 412,266   & 2,595,054   \\ \hline
    \end{tabular}}
\caption{Estimated population for each race in Nevada using prior (e.g. ACS 2021 dataset).}
\label{table:population_estimation_NV}
\end{table}


\FloatBarrier
\section{Additional Experiments on Education Level}
\label{app:education_level}

This section presents results using \emph{educational level} as a subgroup instead of \emph{race}. For details on each educational level, see Table~\ref{table:education_level}. The following experiments use Connecticut as the state of focus.

\begin{table}[h]
\centering
\begin{tabular}{|c|c|}
\hline
Education & Description \\ \hline
1   & No schooling to grade 4   \\ \hline
2   & Grade 5, 6, 7, or 8   \\ \hline
5   & Grade 9, 10, or 11   \\ \hline
6   & Grade 12   \\ \hline
7   & 1 year of college   \\ \hline
8   & 2 years of college   \\ \hline
10  & 4 years of college   \\ \hline
11  & 5+ years of college   \\ \hline
\end{tabular}
\caption{Description for each education level}
\label{table:education_level}
\end{table}

\paragraph{Proxy plots.} Figure~\ref{fig:proxy_EDUC} presents the \emph{proxy plot}, as discussed in Section \ref{sec:empirical_variance_estimation}, for Connecticut, using \emph{education level} as the subgroup.

\begin{figure*}[h]
    \centering
    \begin{subfigure}[t]{0.195\textwidth}
        \centering
        \includegraphics[width=\linewidth]{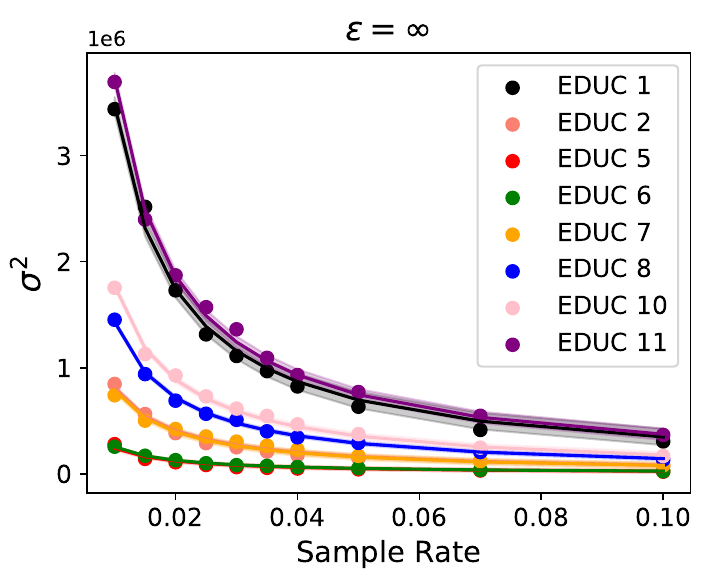}
    \end{subfigure}
    \begin{subfigure}[t]{0.195\textwidth}
        \centering
        \includegraphics[width=\linewidth]{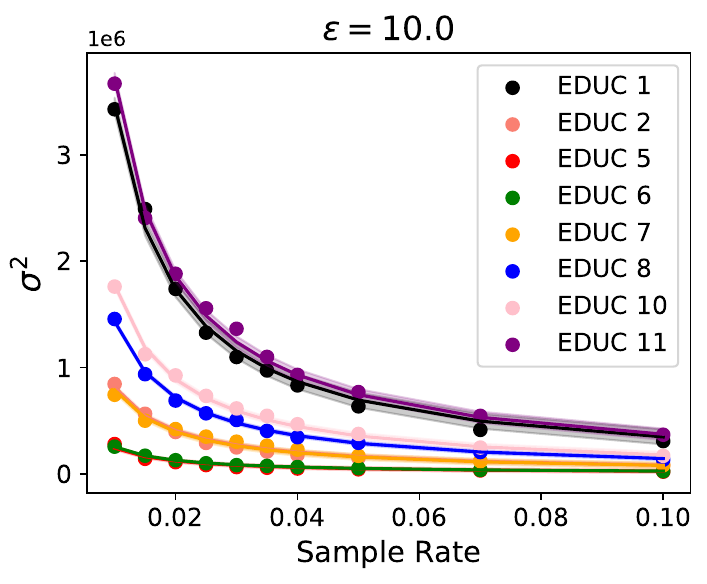}
    \end{subfigure}
    \begin{subfigure}[t]{0.195\textwidth}
        \centering
        \includegraphics[width=\linewidth]{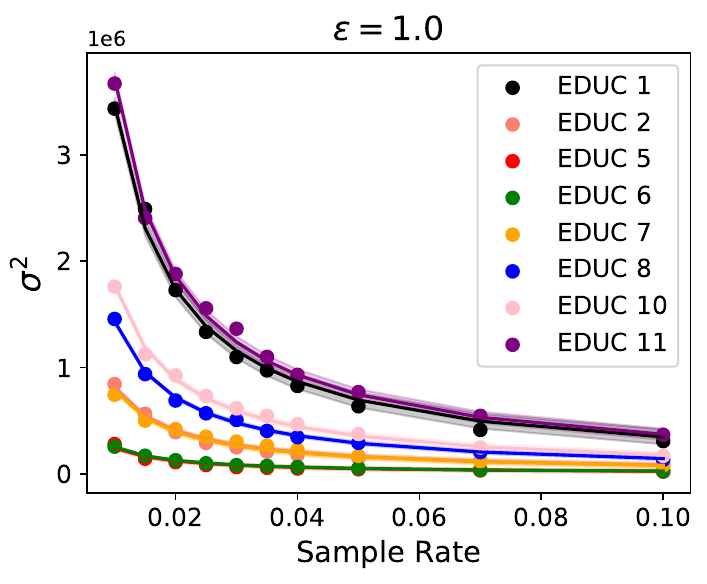}
    \end{subfigure}
    \begin{subfigure}[t]{0.195\textwidth}
        \centering
        \includegraphics[width=\linewidth]{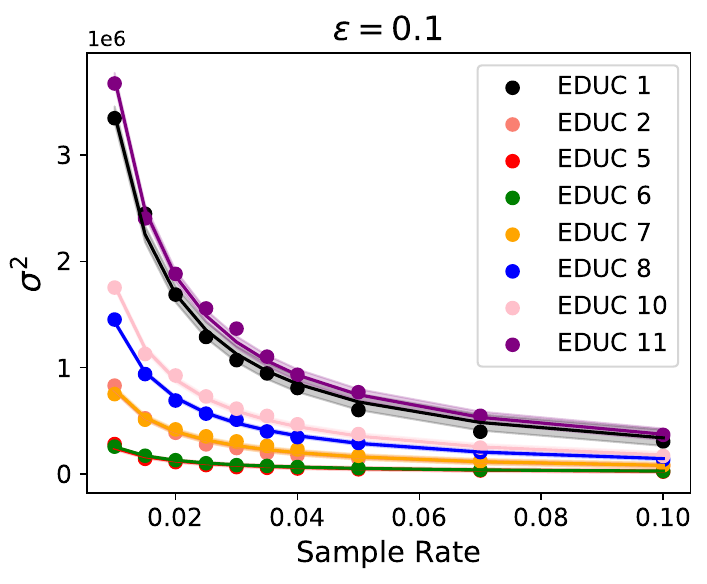}
    \end{subfigure}
    \begin{subfigure}[t]{0.195\textwidth}
        \centering
        \includegraphics[width=\linewidth]{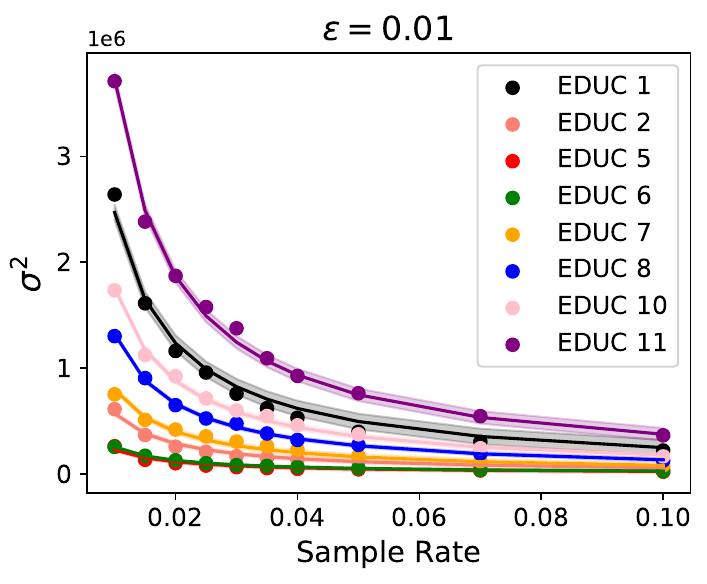}
    \end{subfigure}
    \caption{Estimating the variance of mean \emph{income} in Connecticut using \emph{education level} as a subgroup with different privacy budget $\varepsilon$. Results averaged over 200 trials and 200 data points.}
    \label{fig:proxy_EDUC}
\end{figure*}

\paragraph{Optimized Sampling.} As shown in Figure~\ref{fig:rel_err_educ}, the \emph{Standard Allocation} method fails to meet confidence constraints for all groups with the highest education level below high school (1, 2, and 5). Both optimization approaches effectively reduce these error disparities.

\begin{figure}[h]
    \centering
    \includegraphics[width=0.45\columnwidth]{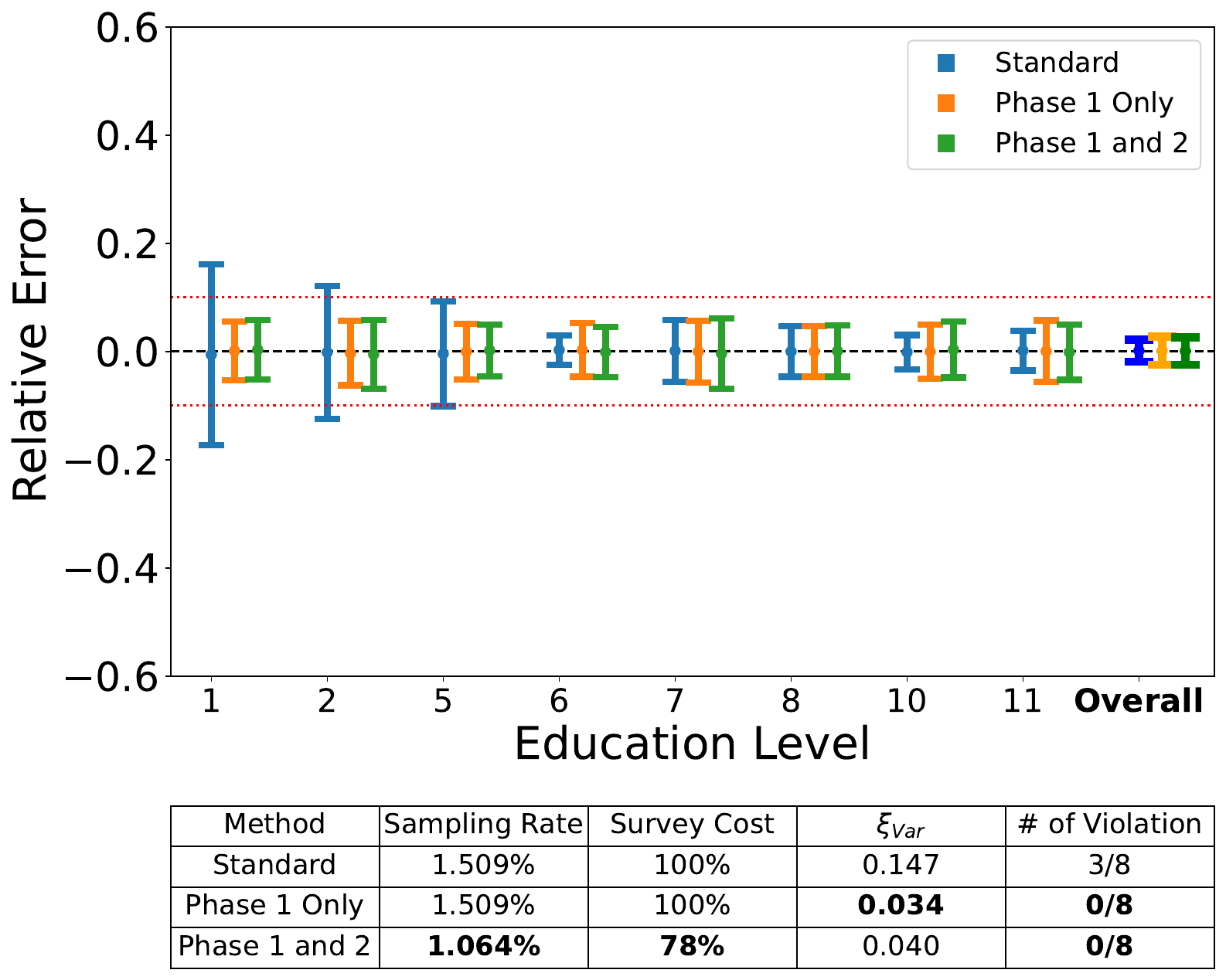}
    \caption{Relative errors from estimating average \emph{income} in Connecticut for each \emph{education level}}
    \label{fig:rel_err_educ}
\end{figure}

\paragraph{DP sampling.}  The surprising finding from Section~\ref{sssec:DP-sampling}—that adding more noise can reduce the variance of errors—can also be observed in Figure~\ref{fig:dp_educ}.

\begin{figure*}[h]
    \centering
    \includegraphics[width=0.8\textwidth]{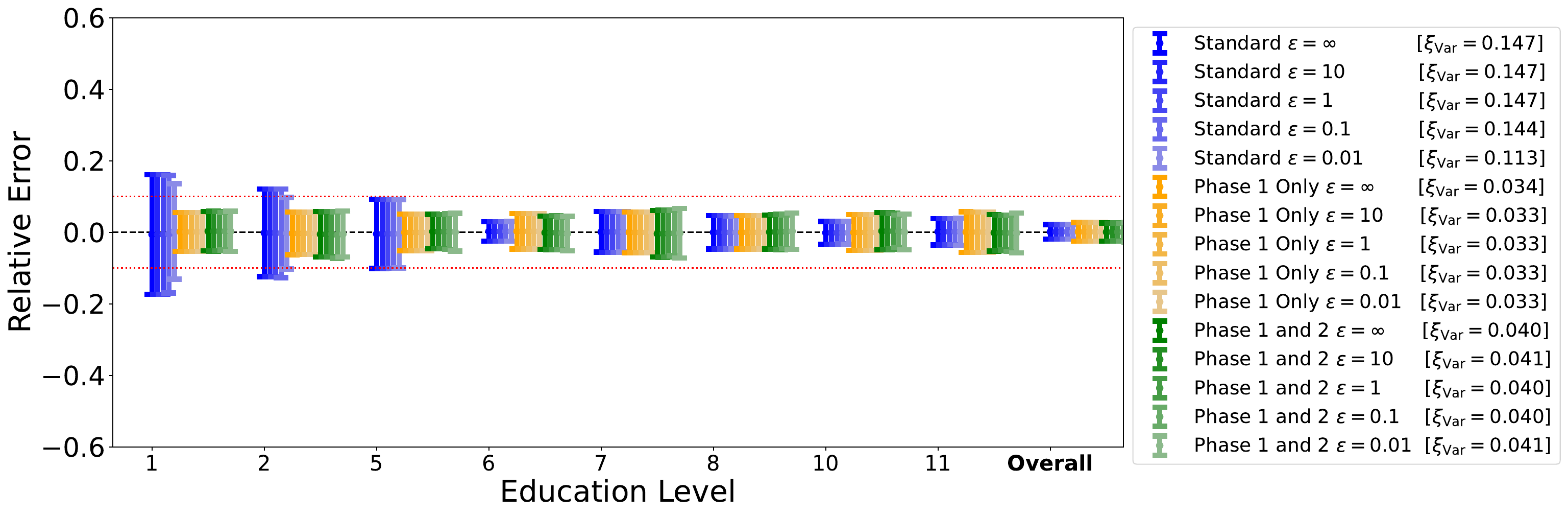}
    \caption{Relative errors in estimating mean income using \emph{education level} as a subgroup, with DP-noised $N_i^r$, are shown for Connecticut. Each region used in the phase 2 contains approximately 4,000 people, similar to the size of Census Tracts.}
    \label{fig:dp_educ}
\end{figure*}

\FloatBarrier
\section{Phase 1 Algorithm}
\label{sec:algorithm}

The algorithm for solving the \emph{Phase 1 Only} optimization approach involves solving the following program.
    {\large
\begin{subequations}
    \label{eq:6}
    \begin{align}
        \minimize_{\vcol{\bm p}} &\; 
         \underbrace{c_1 \left(\sum_{i \in [G]} \vcol{p_i} N_i \right)}_{1^{\text{st}} \text{ phase cost}} \label{obj:6a}\\
    \hspace{-10pt}
            \texttt{s.t.}~~&
                {n}_i = \underbrace{\vcol{p_i} N_i \left(1-{F_i^1}\right)}_{1^{\text{st}} \text{ phase samples}}
                         \; \forall i \in [G] \label{c:6b}\\
            &\; \Pr(|\err(\hat{\bm \theta}_i({n}_i))| > \gamma_i) \leq \alpha, \;\; \forall i \in [G], \label{c:6c}\\
            &\; 0 \leq \vcol{p_i} \leq 1, \;\; \forall i \in [G] \label{c:6d}
\end{align}
\end{subequations}
}

\FloatBarrier
\section{Varying Parameters}
\label{sec:varying_param}
\begin{figure}[h]
    \centering
    \begin{subfigure}[h]{0.3\columnwidth}
        \centering
        \includegraphics[width=\linewidth]{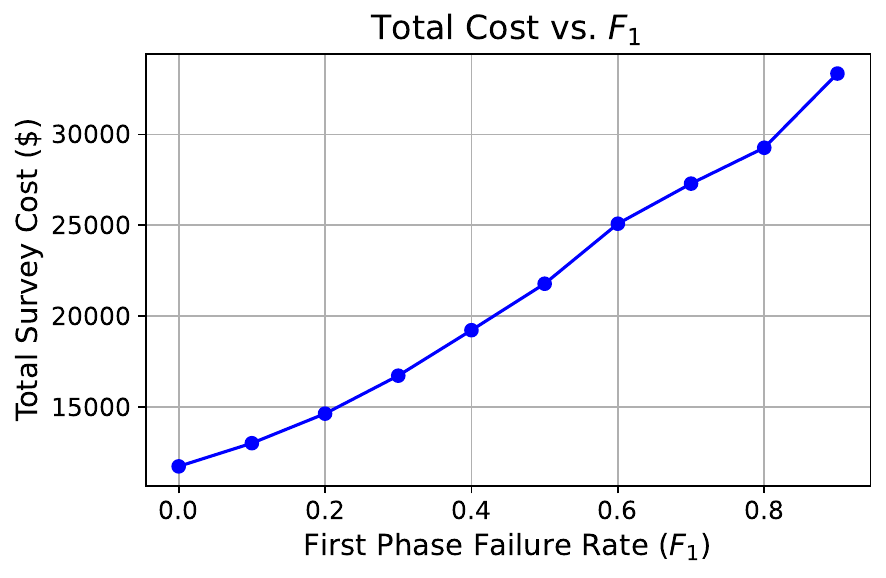}
        \caption{Effect of $F_1$ on the total cost.}
        \label{fig:varying_param_F1}
    \end{subfigure}
    \begin{subfigure}[h]{0.3\columnwidth}
        \centering
        \includegraphics[width=\linewidth]{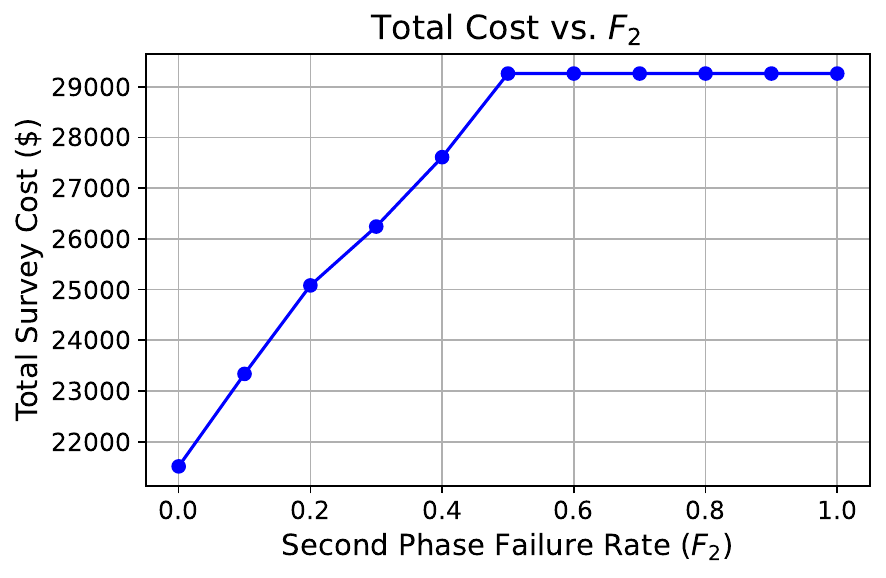}
        \caption{Effect of $F_2$ on the total cost.}
        \label{fig:varying_param_F2}
    \end{subfigure}
    \vskip\baselineskip
    \begin{subfigure}[h]{0.3\columnwidth}
        \centering
        \includegraphics[width=\linewidth]{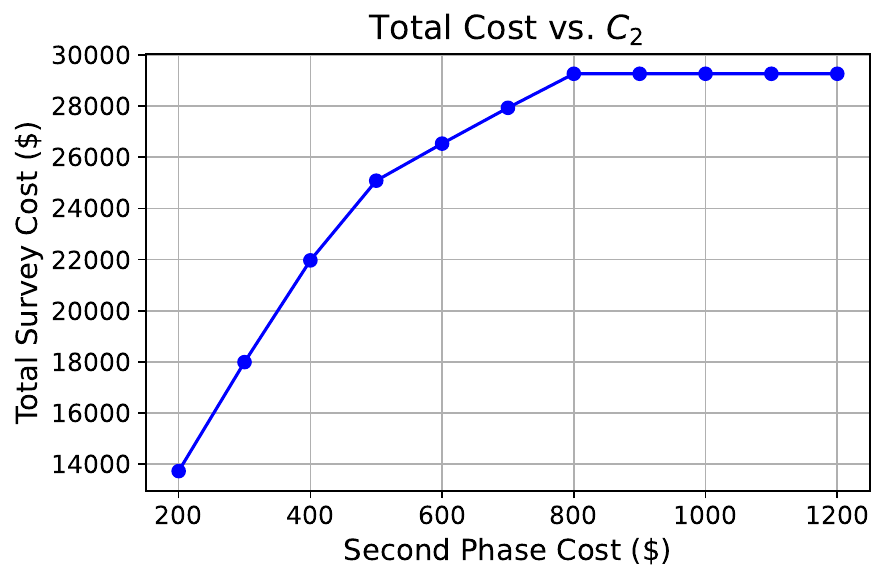}
        \caption{Effect of $c_2$ on the total cost.}
        \label{fig:varying_param_c2}
    \end{subfigure}
    \begin{subfigure}[h]{0.3\columnwidth}
        \centering
        \includegraphics[width=\linewidth]{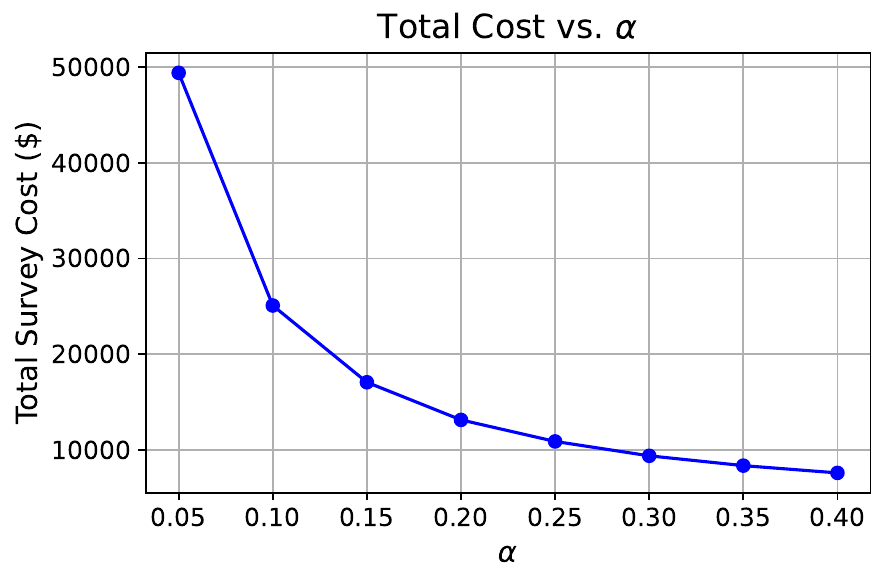}
        \caption{Effect of $\alpha$ on the total cost.}
        \label{fig:varying_param_alpha}
    \end{subfigure}
    \begin{subfigure}[h]{0.3\columnwidth}
        \centering
        \includegraphics[width=\linewidth]{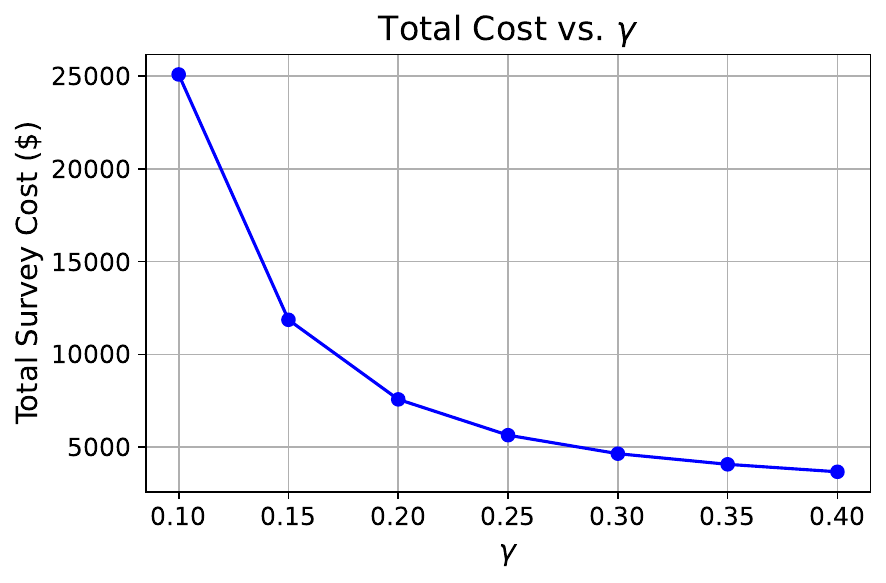}
        \caption{Effect of $\gamma$ on the total cost.}
        \label{fig:varying_param_gamma}
    \end{subfigure}
    \caption{Effect of varying parameters on the total cost of the survey. The study was done on estimating average \emph{income} in \emph{Connecticut} for each \emph{race}. The base parameters were $F_1 = 0.6, F_2 = 0.2, c_1 = 1, c_2 = 500, \alpha = 0.1, \text{and } \gamma = 0.1$. }
    \label{fig:varying_param}
\end{figure}

\textbf{Ablation study for total costs: }
\begin{itemize}
    \item \textbf{Total Cost vs. $F_1$:} The increasing cost trend is pretty clear. This is a direct cause from needing to sample more to meet the required number of \emph{successful} samples.
    \item \textbf{Total Cost vs. $F_2$:} The increasing cost trend is pretty clear. Notice that the total cost curve plateaus near $0.6$, which is the base value of $F_1$. When $F_2 \geq F_1$, there is no merit to utilizing the second-phase at all as it is more costly per-survey wise. Therefore, the optimizer selects only from the first-phase to minimize the cost once $F_2 \geq F_1$.
    \item \textbf{Total Cost vs. $c_2$:} The increasing cost trend is pretty clear. Notice the total cost curve starts to plateau as $c_2$ increases. This is due to the optimizer preferring to sample more from the first-phase once $c_2$ becomes too expensive (at which point, an equal or better level of utility can simply be achieved by allocating more surveys during the first phase only for less cost).
    \item \textbf{Total Cost vs. $\alpha$:} The trend is downward as relaxing $\alpha$ allows for a bigger margin for error due to a lower confidence interval, which leads to less sampling.
    \item \textbf{Total Cost vs. $\gamma$:} The trend is downward as relaxing $\gamma$ allows for a bigger margin for error under $1-\alpha$ confidence interval, which leads to less sampling.
\end{itemize}

\FloatBarrier
\section{Sparsity Analysis}
\label{app:sparsity_analysis}
In Section~\ref{sec:sampling_privacy}, we discussed how the size of the region influences the magnitude of bias when using a fixed privacy budget  (Corollary~\ref{cor:bias_aggr}). Figure~\ref{fig:sparsity} presents the results of the same experiment shown in Figure~\ref{fig:relative_error_with_privacy}, but across three different levels of sparsity: 200,000 People Per Region (PPR), 4,000 PPR, and 2,000 PPR. These levels correspond to sampling regions at the county level, Census Tract level, and Census Block Group level, respectively. As we reduce the size of the regions, the variance of the error decreases under the \emph{Standard Allocation} method, even though the privacy budget $\varepsilon$ remains constant. This decrease in variance is analogous to the effect observed with differential privacy sampling when the privacy budget increases, except here the reduction in bias is due to a smaller $N_i^r$ (the number of individuals in each region), rather than a smaller $\varepsilon$.

\begin{figure*}[t]
    \centering
    \begin{subfigure}[h]{1.00\textwidth}
        \centering
        \includegraphics[width=\linewidth]{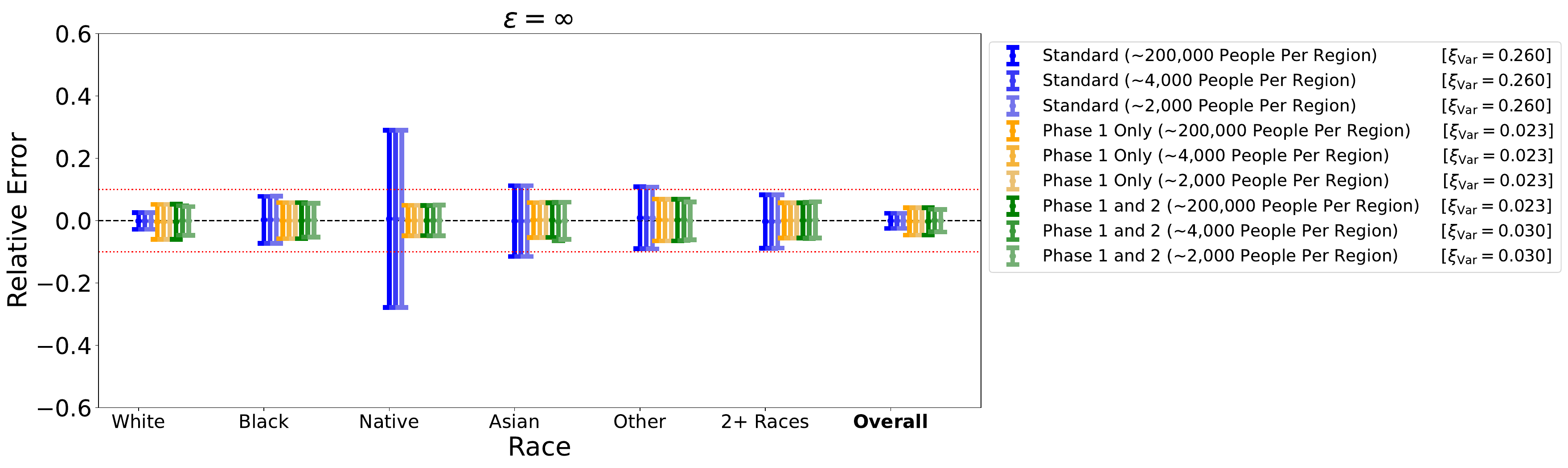}
    \end{subfigure}
    \hfill
    \begin{subfigure}[h]{1.00\textwidth}
        \centering
        \includegraphics[width=\linewidth]{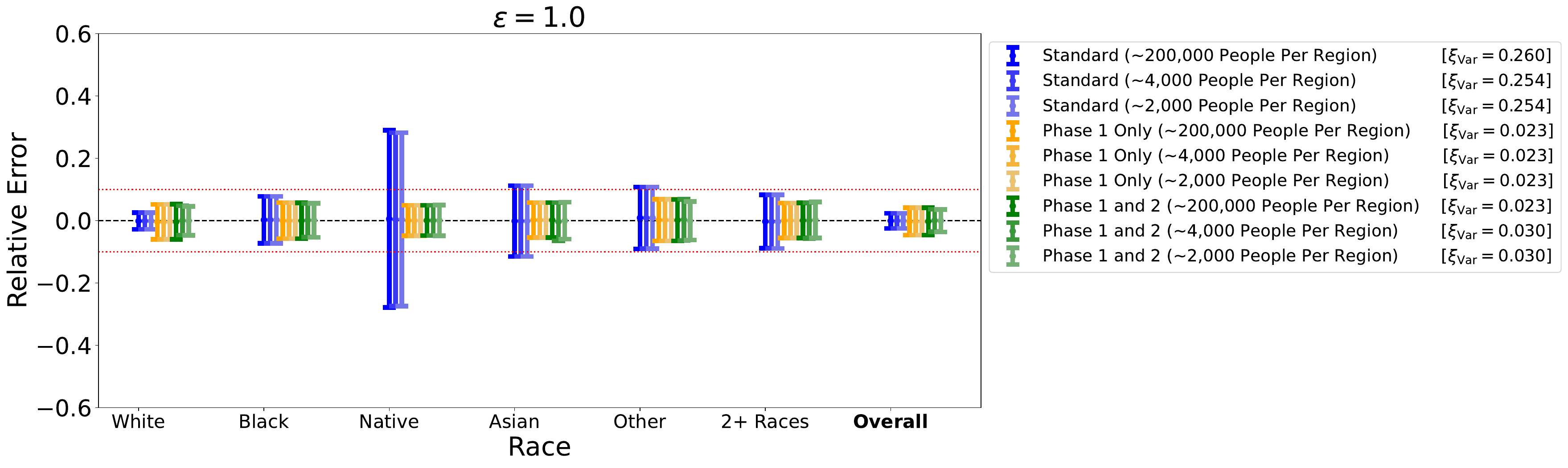}
    \end{subfigure}
    \hfill
    \begin{subfigure}[h]{1.00\textwidth}
        \centering
        \includegraphics[width=\linewidth]{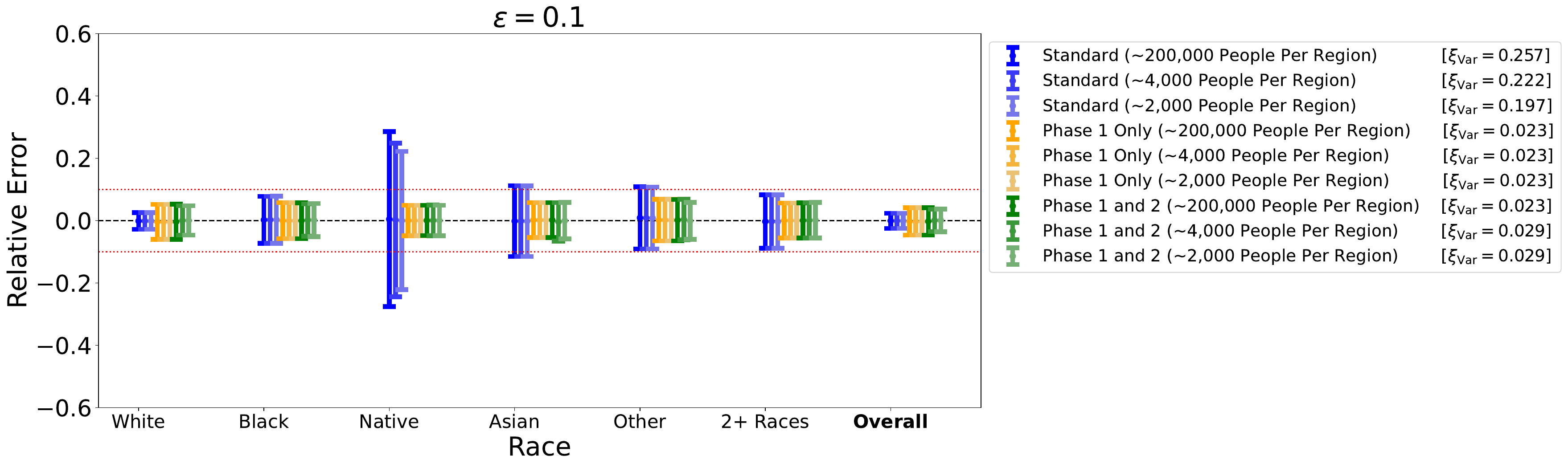}
    \end{subfigure}    
    \hfill
    \begin{subfigure}[h]{1.00\textwidth}
        \centering
        \includegraphics[width=\linewidth]{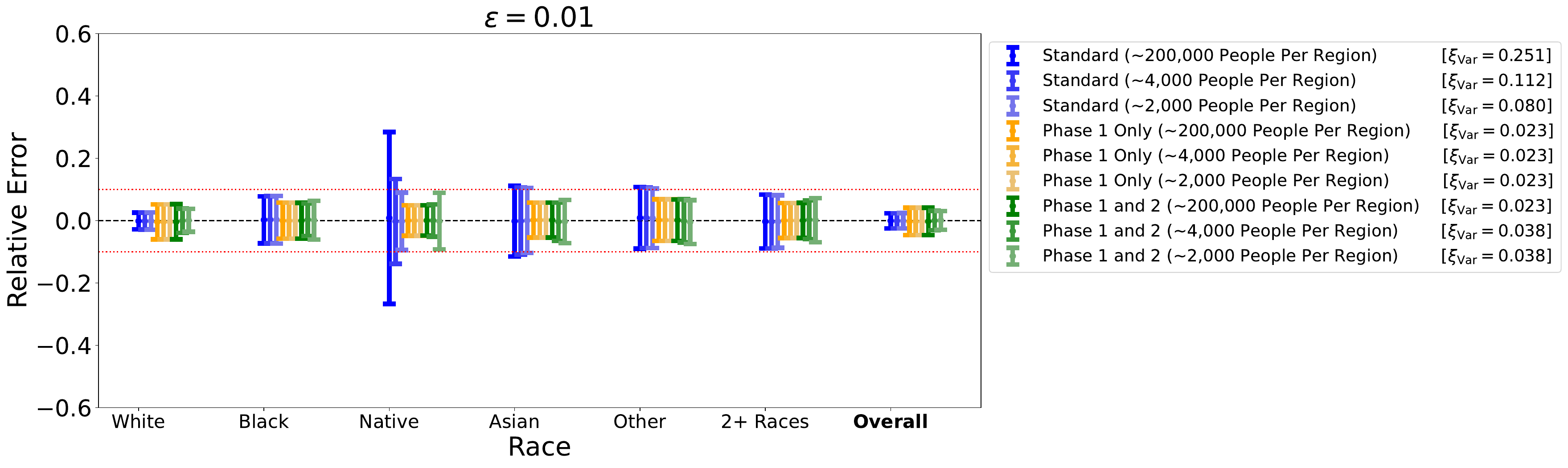}
    \end{subfigure}
    \caption{Relative errors from estimating mean income using $\varepsilon \in \{\infty, 1, 0.1, 0.01\}$ DP-noised $N_i^r$ with various region sizes (for Phase 2) in Connecticut. Regions with 200,000 PPR (People Per Region) are comparable to dividing the state into counties, 4,000 PPR is similar to dividing the state into Census Tracts, and 2,000 PPR resembles dividing the state into Census Block Groups.}
    \label{fig:sparsity}
\end{figure*}

\end{document}